\documentclass[letterpaper,twocolumn,10pt]{article}
\usepackage{usenix2019_v3}

\usepackage{lineno,comment}

\usepackage{pifont}
\usepackage{epsfig,epsf,url}
\usepackage{tabularx}
\usepackage{float}
\usepackage[linesnumbered,boxed,ruled,vlined]{algorithm2e}
\usepackage{amsmath}
\usepackage{amssymb}
\usepackage{mathtools}
\usepackage{amsthm}
\usepackage{rotating}
\usepackage{diagbox}
\usepackage{makecell}
\usepackage{algpseudocode}
\usepackage{multirow}
\newtheorem{theorem}{Theorem}
\newtheorem{definition}{Definition}

\long\def\comment#1{}
\usepackage{fancyvrb}
\usepackage{booktabs}
\usepackage{color}
\usepackage{subcaption}
\usepackage{caption}
\usepackage{listings}
\newcommand{\eg}{{\it e.g.},~}

\newcommand{\etc}{{\it etc.~}}
\newcommand{\ie}{{\it i.e.},~}

\newcommand{\sys}{{\textsf{Wormhole}}\xspace}
\newcommand{\sysU}{{\textsf{Wormhole+Unison}}\xspace}

\newcommand{\paraspace}{\vspace{0.05in}} 
\newcommand{\topparab}[1]{\noindent{\bf #1}}

\newcommand{\parab}[1]{\paraspace\noindent{\bf #1}}

\newenvironment{icompact}{
  \begin{list}{$\bullet$}{
    \parsep 0.5pt plus 0.5pt
    \itemsep 0.5pt plus 0.5pt
    \leftmargin 0.2in}
       }
  {\normalsize\end{list}}

\newcount\hour \newcount\minute
\hour=\time  \divide \hour by 60
\minute=\time
\loop \ifnum \minute > 59 \advance \minute by -60 \repeat
\def\drafttime{\ifnum \hour<13 \number\hour:%
                      \ifnum \minute<10 0\fi
                      \number\minute
                      \ifnum \hour<12 \ AM\else \ PM\fi
         \else \advance \hour by -12 \number\hour:%
                      \ifnum \minute<10 0\fi
                      \number\minute \ PM\fi}

\usepackage{cleveref}
\crefname{section}{}{\S\S}
\crefdefaultlabelformat{\S#2#1#3}
\usepackage{hyperref}

\begin{document}
\title{Supercharging Packet-level Network Simulation of Large Model Training \\via Memoization and Fast-Forwarding}


\author
{
    \rm Fei Long$^{\star}$, Kaihui Gao$^{\ddag}$, Li Chen$^{\ddag}$, Dan Li$^{\star}$, Yiwei Zhang$^{\star}$, Fei Gui$^{\ddag}$, 
    \\ \rm Yitao Xing$^{\dagger}$, Wenjia Wei$^{\dagger}$, Bingyang Liu$^{\dagger}$\\
    \textit{$^\star$Tsinghua University\ \ \ ~~~~~\ \ $^\ddag$Zhongguancun Laboratory\ \ \ ~~~~~\ \ $^\dagger$Huawei Technologies Co., Ltd.}
    \\ \textbf{This manuscript is accepted by NSDI 2026.}
}




\maketitle


\begin{abstract}
Packet-level discrete-event simulation (PLDES) is a prevalent tool for evaluating detailed performance of large model training. Although PLDES offers high fidelity and generality, its slow performance has plagued networking practitioners.
Existing optimization techniques either simplify the network model, resulting in large errors; or execute it in parallel using multiple processors, with an upper bound on speedup.


This paper explores an alternative optimization direction that reduces the computational loads of PLDES while maintaining high fidelity. 
Our key insight is that, in distributed LLM training, packet-level traffic behaviors often exhibit \textit{repetitive contention patterns} and \textit{steady-states} where flow rates stabilize, ignoring these redundant discrete events speeds up the simulation considerably and the error is negligible.
We realize this idea by proposing \sys, a user-transparent PLDES kernel capable of automatically memoization for unsteady-states and skipping for steady-states. 
\sys adopts network partitioning, state memoization and reuse, and rate-based steady-state identification to accurately determine the periods of each flow's steady-state, while maintaining simulation consistency after fast-forwarding. 
Experiments demonstrate that \sys can achieve a 744$\times$ speedup over the original ns-3 (510$\times$ for MoE workload), with a bounded error of $<$1\%. Applying current multithreading parallel techniques and \sys together allows a 1012$\times$ speedup, reducing the simulation time for one GPT-13B training under 128 GPUs from 9 hours to 5 minutes.
\end{abstract}






\setlength{\textfloatsep}{1pt}
\setlength{\abovecaptionskip}{1pt plus 0pt minus 0pt}
\setlength{\belowcaptionskip}{1pt plus 0pt minus 0pt}
\medmuskip=0mu\relax 
\thinmuskip=0mu\relax 
\thickmuskip=0mu\relax 

\section{Introduction}
\label{sec:intro}
Large-scale infrastructures of Large Language Models (LLMs), serving as the engine behind the development of generative AI, are currently witnessing an unprecedented surge in investments~\cite{jiang2024megascale,gangidi2024rdma,Stargate}. 
LLM training simulation stands as a crucial means to ensure the most effective use of investment~\cite{won2023astrasim2,simai,multiverse,shen2025atlahs,qian2025miniature,kundu2024performance,feng2024echo,bang2024vtrain,lu2023distsim}, these simulators have become indispensable tools to explore the design space of model operator orchestration, parallel strategies, collective communication parameters, transport protocols, network topologies, \textit{etc.}

The packet-level discrete event simulation (PLDES)~\cite{ns2,ns3,omnet,dons,bai2024unison,mimicnet} is the predominant tool for LLM training simulations due to its high fidelity.
PLDES meticulously executes the behavior of each packet to accurately simulate performance critical events such as queueing, packet loss, and computation-communication overlapping; these events are important for the design of LLM training systems.

However, detailed simulation leads to heavy computational load, which has been a key problem for PLDES in LLM training simulations~\cite{multiverse,mimicnet}.
When simulating LLM training clusters, the network interconnects $10^3$--$10^6$ GPUs~\cite{jiang2024megascale,hu2024characterization,qian2024alibaba,gangidi2024rdma}, and carries a large number of elephant flows ($GB$ level), which generate a massive number of discrete events (>$O(10^{12})$) that should be executed strictly chronologically. As a result, existing PLDESs (\eg ASTRA-sim~\cite{rashidi2020astrasim,won2023astrasim2} with ns-3~\cite{ns3}) typically take several \textit{weeks} to simulate one training iteration of GPT3-175B~\cite{gpt3,won2023astrasim2}(\cref{sec:bg:expr}).



\parab{State of the arts.}
Currently, there exist two categories of technologies that aim to optimize the speed of the network PLDES. 
The first category involves coarse-grained modeling for networks. Specifically, flow-level simulators~\cite{alizadeh2011analysis,marsan2005using,misra2000fluid,peng2014multipath,ciucu2012perspectives,netcal,queueingtheory} and analytical models~\cite{lin2024towards, kundu2024performance} that solely calculate flow rates at network stability; 
AI-based methods employ deep neural networks to approximate network performance at different granularities: from individual switches~\cite{yang2022deepqueuenet}, to subtopologies~\cite{m3, mimicnet, kazer18fast}, and up to the entire network~\cite{rusek2020routenet,ferriol2023routenet}.
Both ignore key packet-level events that contribute to performance fluctuations, showing an error margin of 10\% to 25\% in the LLM training scenario. 

The second category enables parallel execution of PLDES using multiple cores or machines, which has recently received significant attention~\cite{dons,bai2024unison,multiverse,simai}. While it achieves a certain speedup, the acceleration effects exhibit sublinear scaling as the number of CPU cores increases, eventually reaching an upper bound. Experimental results (\cref{sec:bg:expr}) suggest that employing Unison~\cite{bai2024unison} to simulate GPT3-175B can achieve a maximum speedup of 10$\times$, after which the acceleration decreases as the number of cores increases.


\parab{Research question.}
To further speed up PLDES, this paper addresses the question: \emph{Can we reduce the computation load within PLDES while maintaining packet-level modeling and high fidelity?} 
A key advantage of reducing computational load of PLDES is in its orthogonality to existing multi-core parallelization techniques, enabling the acceleration of both approaches to compound.


\begin{figure}[t]
	\centering
	\includegraphics[width=0.98\columnwidth]{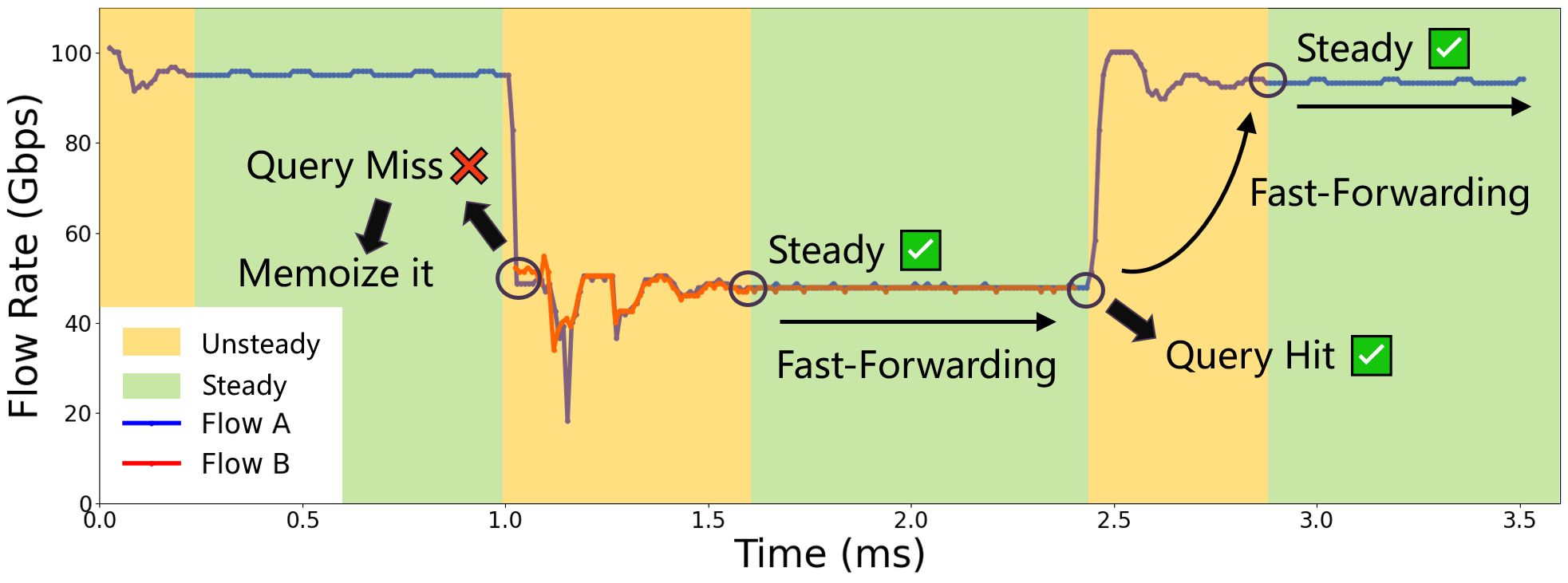}
	\caption{Illustrating unsteady-states, steady-states, state memoization, and simulation fast-forwarding in packet-level network simulation at the scenario of LLM training.}
	\label{fig:bg:flow_rate}
\end{figure}

\parab{Key insight.}
By analyzing the network communication patterns during LLM training, we observe that there are two categories of simulation load that can be reduced. 
(1) \emph{Repeated Contention Patterns}: identical traffic conflict patterns (\eg last-hop incast, flow contention at core switch) result in reproducible rate evolution dynamics. The simulation of these episodes can be fast-forwarded by reusing historical results. (2) \emph{Steady-state}: after congestion control algorithms (CCAs)~\cite{dctcp,dcqcn,timely,hpcc} convergence, flows enter a steady-state wherein transmission rates exhibit only minor periodic oscillations or remain constant. The PLDES of this repetitive regime is computationally redundant and can be similarly skipped.

LLM training, including GPT~\cite{gpt3} and MoE~\cite{deepseek-v3} models, is prone to these two phenomena, which is particularly revealed by the behavior of data parallel (DP) flows. 
Specifically, during one iteration of LLM training, DP flows (>$GB$ level) involve the synchronization of model parameters and gradients within different clusters, which may run periodically and converge to a stable rate.
In Figure~\ref{fig:bg:flow_rate}, we take two flows in one path during LLM training as an example to show fast-forward simulation.
In the unsteady-state with rate fluctuation, if the exact same contention pattern has occurred in the history, we can reuse the historical data to fast-forward the simulation process to the steady-state; if it has not occurred in the history, then PLDES is executed as usual.
Then, the two flows will go to steady-state, their rates tend to stabilize, and the queue length on the links also stabilizes without packet loss. 
Unless a new disruptive event (\eg flow enter/exit) occurs, this steady-state will continue.
The simulation can fast-forward to the next unsteady-state without significantly affecting the simulation results, the FCT error if we skip the steady periods in Figure~\ref{fig:bg:flow_rate} is less than 1\%.


\parab{Challenges.}
However, fast-forwarding both unsteady and steady-states to accelerate PLDES is not straightforward. 
Firstly, it requires the precise identification of the network regions in different states. If the whole network is regarded as a single region, it is difficult to match historical data and enter the global steady-state.
Secondly, memoizing unsteady-states necessitates the extraction of key features that characterize flow contention patterns. This feature set must be representative enough and reproducible enough to enable the frequent reuse of historical data without compromising the accuracy of the simulation. 
Finally, defining and identifying the steady-state involves a delicate trade-off between the accuracy and the degree of acceleration achieved --- a balance that is best guided by theoretical frameworks.





\parab{Our solution.}
This paper presents \sys, a user-transparent PLDES kernel capable of automatically memoization~\cite{michie1968memo, ford2002packrat, ford2004parsing} and fast-forwarding, compatible with current PLDES  parallelization technologies~\cite{bai2024unison,dons,fujimoto1990parallel}.
The main goals of \sys are to accurately identify and skip both the unsteady and steady-states, maintaining the consistency and correctness of the simulation. We propose three designs to address the aforementioned challenges.

\ding{182} Network Partitioning Algorithm: To accurately identify the network regions in different states, we observe that the network tends to form non-interfering partitions in LLM training, which can be handled separately.
To find them in advance, we propose a port-level network partitioning algorithm. This involves dividing the network into multiple connected graphs (\ie partitions) based on the ports through which flows go. 
Flows passing through the same port belong to the same partition. 
Consequently, the state of one partition is only determined by the flows within it.

\ding{183} Memoization for Unsteady-states: To utilize repeated flow patterns, we employ the memoization technique~\cite{michie1968memo, ford2002packrat, ford2004parsing} and build a database to record previously simulated scenarios. 
Specifically, we abstract a flow conflict graph (FCG) for one network partition, which captures the critical determinants of the unsteady-state processes. 
The database stores the first occurrence of unsteady-state processes, \ie the FCG (the \textit{key}) and the snapshot at the end of the unsteady-state (the \textit{value}).
In the subsequent simulation, when one network partition enters an unsteady-state, it will query the database and reuse the historical data (if the query hits).

\ding{184} Steady-state Identification Algorithm: By studying the dynamic equations of mainstream CCAs~\cite{hpcc,dcqcn,timely,dctcp}, we propose a steady-state identification algorithm based on a unified metric --- \textit{sending rate}. 
If the rate fluctuation within a monitoring interval is below a predefined threshold, the flow enters a steady-state, and the average rate within this interval is utilized as the rate during the steady-state.
Theoretical analysis confirms that when the sending rate is stable, other flow metrics are stable.
Moreover, we analyze the error of the algorithm as well as the choice of its parameters.


We prototype \sys based on ns-3~\cite{ns3} and address several practical challenges. 
1) To maintain the impact of steady regions on other flows, such as buffer occupancy, we pause the packets in the steady region's ports, thereby keeping the queue length constant until the steady-state ends. 
2) To elegantly skip the simulation process of ns-3 without reconstructing its underlying architecture, when skipping a period for a partition, we increase the timestamps of the partition's events by $\Delta T$, instead of clearing these events.

More specifically, we make the following contributions:

\begin{icompact}
     \item We find that there are two types of computational load that can be omitted in LLM training simulation, resulting in substantially faster simulation with negligible error, and we are the first to introduce the concepts of \textit{memoization} and \textit{fast-forwarding} into PLDES.
     
     \item We propose a series of algorithms to identify as many fast-forward opportunities as possible, notably network partitioning algorithm, steady-state identification algorithm, and simulation skip mechanism.
     
     \item Based on the dynamic equations of CCAs, we theoretically analyze the bounded simulation error of \sys and threshold guidance.

     \item We perform extensive experiments to confirm the speed and accuracy of \sys, which can achieve a 744$\times$ speedup on simulating GPT3-175B workload (510$\times$ on MoE workload) over the original ns-3, with a bounded error of $<$1\%. Applying Unison and \sys simultaneously allows a 1012$\times$ speedup on GPT3 workload (716$\times$ on MoE workload).
     
\end{icompact}
\parab{Limitations.} 
In highly dynamic and random-flow-pattern scenarios such as public cloud and multi-tenant workloads~\cite{wang2015bandwidth, yu2017survivable, li2021analyzing}, the traffic exhibits fewer repeating patterns and reaches a steady-state at lower frequency. 
Consequently, the benefit of \sys diminishes, with performance degrading to ns-3 baseline levels in worst-case scenarios, but without extra time cost and accuracy loss.

\noindent
\textit{This work does not raise any ethical issues.}

\begin{figure*}[t!]
	\centering
	\begin{minipage}{.32\linewidth}
		\subfloat[{Speed for ns-3 to simulate network communications of LLM training}]{
			\includegraphics[width=\columnwidth]{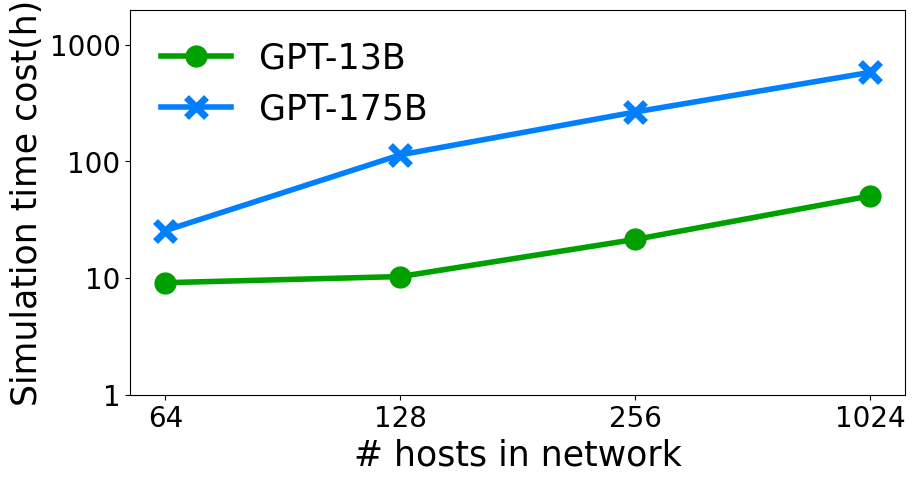}
			\label{fig:bg:des_timecost}
		}
	\end{minipage}
\hfill
	\begin{minipage}{.32\linewidth}
		\subfloat[{Upper bound on the speedup of the multi-threaded simulation approach}]{
			\includegraphics[width=\columnwidth]{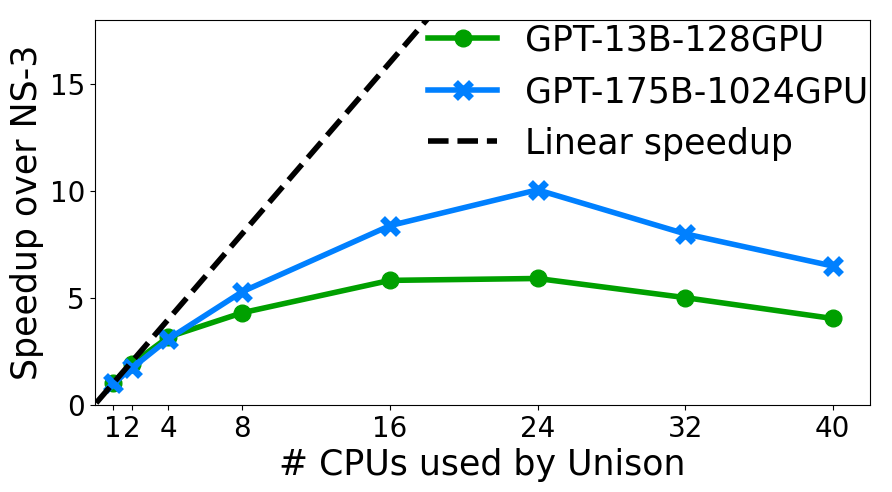}
			\label{fig:bg:unison_sublinear}
		}
	\end{minipage}
 \hfill
        \begin{minipage}{.32\linewidth}
		\subfloat[{Error of flow-level simulators and AI-based methods}]{
			\includegraphics[width=\columnwidth]{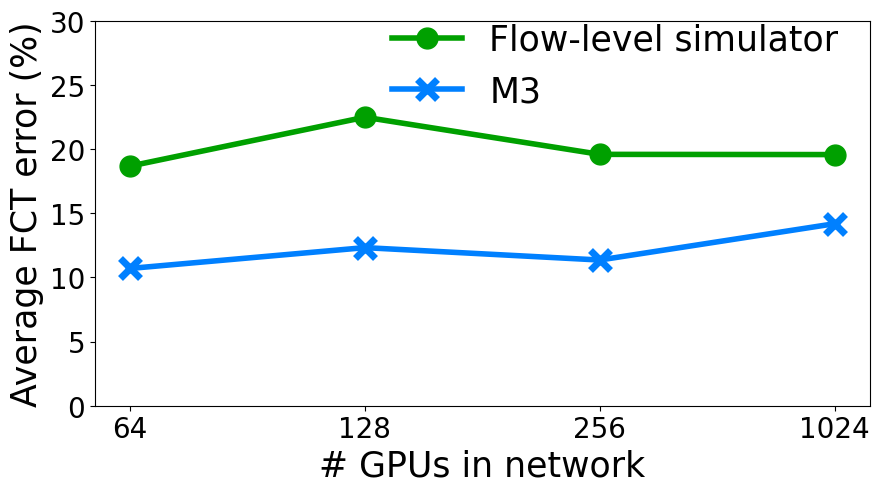}
			\label{fig:bg:flow_ai}
		}
	\end{minipage}
	\caption{The efficiency and accuracy of existing network simulation techniques for LLM training.}
	\vspace{-0.1in}
	\label{fig:eval:profiler}
\end{figure*}

\section{Background and Motivation}
\label{sec:bg}

In this section, we first present the background of LLM training simulation, and examine the performance of existing simulators. 
We then analyze the traffic characteristics of LLM training, and motivate the design of \sys. 

\subsection{Network Simulation in LLM Training}
\label{sec:bg:expr}



\parab{Backgrounds.}
In recent years, the scale of LLM parameter changes rapidly to hundreds of billions and even trillions~\cite{gpt3,switch-transformers}.
LLM training is conducted in a distributed manner across multiple high-performance computing nodes to expedite model convergence. 
To achieve efficient parallel training of LLMs, data parallelism (DP)~\cite{hillis1986data,pytorch-distributed}, tensor parallelism (TP)~\cite{megatron-lm-1}, pipeline parallelism (PP)~\cite{gpipe,megatron-lm-2}, sequence parallelism (SP)~\cite{li2023sequence, megatron-lm-3} and expert parallelism (EP)~\cite{switch-transformers, deepseek-v3} are commonly employed as parallel acceleration methods. 
These methods generate multiple communication domains that perform collective operations according to the workload requirements of LLM training.

Thus, the efficiency of network communication significantly affects training performance. 
Typically, network simulation technology is required to precisely and cost-effectively optimize network communication, identifying the most suitable network configurations for LLM training.
However, as the parameter count of LLMs increases, the scale of training clusters has also expanded to $10^3$ --$10^6$ GPUs~\cite{jiang2024megascale,hu2024characterization,qian2024alibaba,gangidi2024rdma}. 
The performance of current state-of-the-art network simulators is difficult to meet requirements. Next, we conduct experiments to confirm this.

\parab{Experiment settings.}
To compare the speed and accuracy deficiencies of state-of-the-art methods, we simulate a LLM training workload iteration with DP and PP traffic on a 56-core server. The LLM training network is set to scale from hundreds to thousands of GPUs, generating $O(10^2)-O(10^4)$ of DP and PP flows.

\parab{Packet-level discrete-event simulation.}
The pure single-process packet-level simulation in ns-3 is full-fidelity but extremely inefficient. 
During the simulation process, a discrete event level of $O(10^{12})$ can be generated. 
As shown in Figure~\ref{fig:bg:des_timecost}, with the increase in cluster scale, the time cost also shows an exponential growth trend, and it takes several weeks to complete the simulation of large-scale clusters.

\parab{Parallel and distributed DES.}
UNISON~\cite{bai2024unison} and DONS~\cite{dons} use multithreading to execute PLDES in parallel on multiple CPUs.
But they can only provide sublinear speedups due to the synchronization overhead, which always hits an upper bound, as shown in Figure~\ref{fig:bg:unison_sublinear}. 
Another downside is that it uses up a lot of CPU cores that could be used to run multiple independent experiments in parallel, which would exhibit a linear speedup, since LLM engineers always run multiple sets of experiments at once.

\parab{Flow-level simulation.}
Flow-level methods typically fall into two categories. The first computes stable flow rates via max-min or waterfilling allocation~\cite{alizadeh2011analysis,marsan2005using,misra2000fluid,peng2014multipath,ciucu2012perspectives,netcal,queueingtheory}. The second uses analytical models~\cite{lin2024towards, kundu2024performance} or profiling-based techniques~\cite{feng2024echo, bang2024vtrain, lee2024data, lu2023distsim} to approximate flow completion times. Both ignore packet-level dynamics such as queueing, congestion control, and transient losses, and producing $\sim$20\% FCT error under dynamic LLM workloads (Figure~\ref{fig:bg:flow_ai}).

\parab{AI-based methods.} 
AI-based methods~\cite{yang2022deepqueuenet,kazer18fast,mimicnet,rusek2020routenet,ferriol2023routenet} are designed to achieve faster performance estimation using neural network models. 
However, the acquisition of training data incurs high costs. 
Without a sufficiently large training set, they may not generalize to network scenarios.
In Figure~\ref{fig:bg:flow_ai}, M3~\cite{m3} has an error of 10-15\% in various scenarios, which is lower than that of flow-level simulators but still results in significant absolute deviations.

\parab{Summary.} Existing PLDES optimization techniques either simplify the network model, resulting in large errors, or execute it in parallel using multiple processors, with an upper bound on speedup. 
Practitioners are in urgent need of an optimization technique that reduces the computational burden of PLDES and maintains accuracy.

\subsection{Repeated Flow Patterns in LLM Training}\label{subsec:repeat_flow}
\label{sec:bg:repeated}
\begin{figure}[t]
	\centering
	\begin{minipage}{.45\linewidth}
		\subfloat[Repeated contention patterns]{
		\includegraphics[width=\columnwidth]{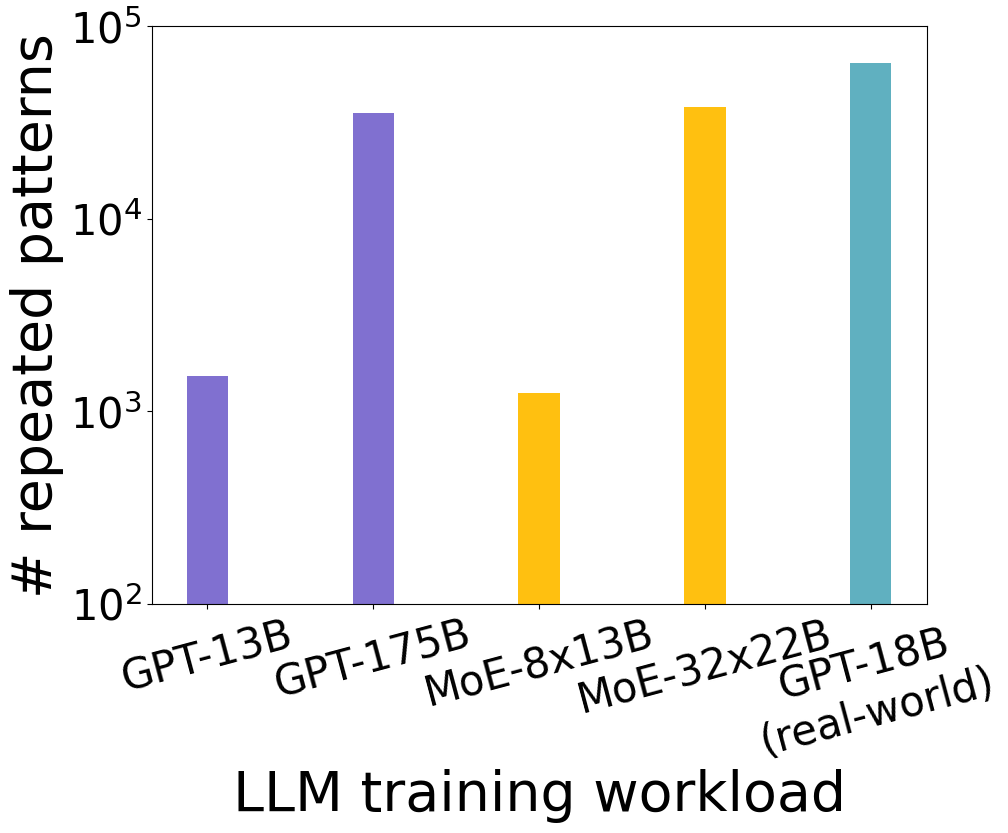} 
		\label{fig:bg:pattern}
		}
	\end{minipage}
	\begin{minipage}{.5\linewidth}
		\subfloat[Proportion of steady-states]{
		\includegraphics[width=\columnwidth]{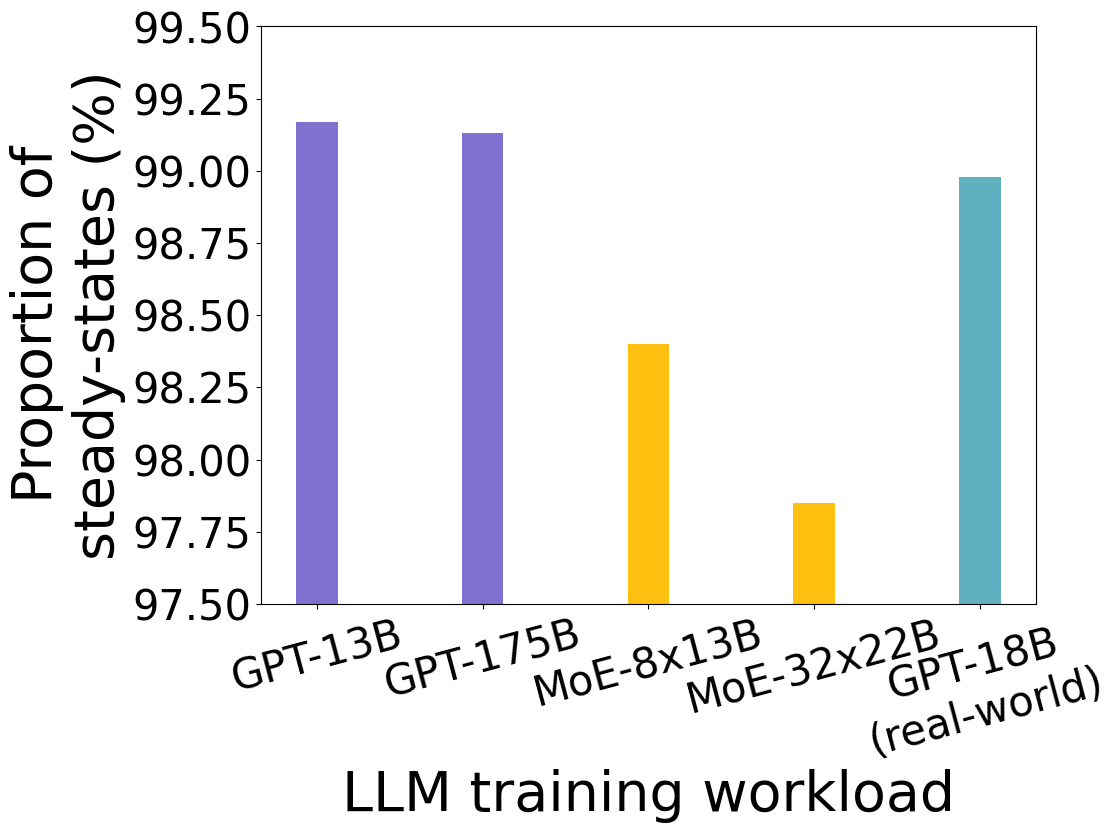} 
		\label{fig:bg:steady}
		}
	\end{minipage}
	\caption{ Repeated contention patterns and steady-states in LLM training.}
	\label{fig:bg:memoization}
\end{figure}

In large-scale model training, flow contention is a common occurrence due to the presence of diverse parallelization strategies.
For instance, All-reduce flows in DP and point-to-point flows in PP may encounter contentions.
A flow contention pattern is described by the set of flows that have overlap, as well as the links they travel through.
We observe that the contention patterns are highly repetitive in LLM training. 
Under a fixed parallelization strategy, the same collective communication tasks are invoked multi-times in one training iteration, and thus the same contention patterns reappear in LLM training simulation. 

To verify the prevalence of the repeated flow pattern, we conduct simulations on SimAI~\cite{simai} for the training of GPT-13B and MoE-8$\times$7B on 128 GPUs, as well as experiments for GPT-175B and MoE-32$\times$22B on 1024 GPUs, 

We also analyze a real world trace of the training of a GPT-18B on 256 NVIDIA A100 GPUs. 
All these experiments employ the Rail-Optimized Fat-tree~\cite{nvidia2023superpod} topology.

As Figure~\ref{fig:bg:pattern} shows, training a GPT-13B or MoE-8$\times$7B on 128 GPUs yields flow pattern repetitions over 1200 times per iteration, where a total of 1633 flow contention patterns are identified. 
Training a GPT-175B or MoE-32$\times$22B on 1024 GPUs yields nearly 40,000 repetitions. 
The real-world case of training a GPT-18B on 256 GPUs yields 65,870 flow contention pattern instances, which collapse into 1,488 distinct patterns with over 60,000 redundant occurrences.

\subsection{Steady-state in LLM Training}






\parab{Proportion of steady-states in LLM training.}
A steady-state in LLM training workloads refers to a temporal interval (steady period) during which network flows in a specific topological subset (steady region) exhibit stable transmission behaviors. 
LLM training uses efficient parallelization methods that generate a significant number of elephant flows of the same size within the cluster~\cite{geng2019elasticpipe, wang2018bml}.
Specifically, when employing data parallelism in LLM training, DP communication domains produce DP flows reaching GB levels. 

Based on the above simulation experiments, we verify and quantify the existence of steady-states in LLM training. We defer the formal definition of the steady-state to $\S$\ref{sec:definition:steady_state}.
As shown in Figure~\ref{fig:bg:steady}, dense models (\eg GPT-175B) exceed 99\% steady-state proportion, while MoE models, due to all-to-all traffic in EP, exhibit $\sim$97.5\%; real-world traces demonstrate 98.82\% steady-state proportion.
We attribute the prevalence of steady-states in LLM training to the periodic exchange of parameters or gradients at fixed intervals required by parallelization methods and the constraint of flow paths to repetitive topological subsets by collective communication patterns. 


\parab{Numerical analysis of simulation error.}
For the case in Figure~\ref{fig:bg:steady}, we obtain the rate evolutions of all flows and perform a numerical analysis to compute the error of FCT after skipping steady-states. 
We identify the steady-state offline and use the average rate over this period as the constant sending rate during the steady-state.
The speedup is calculated by dividing the total flow size by the amount of data sent during the steady period. 
The relative error of the FCT is determined by comparing the FCT derived from the original data with the estimated FCT.
The results indicate that, by ignoring all events in steady-states, the simulation of LLM training can be accelerated by 120$\times$ on GPT (60$\times$ on MoE), with an average FCT error of only 1\%. This high acceleration potential motivates \sys.







\section{\sys Overview}\label{sec:overview}
In this section, we first define two important terms and concepts, \textit{network partition} and \textit{steady-state}, and then introduce the overall workflow of \sys.
We present the skipping of unsteady-states in $\S$\ref{sec:memoization} and skipping of steady-states in $\S$\ref{sec:steady}.

\subsection{Terminology \& Formal Definitions}

\subsubsection{Port-level Network Partition}
\label{sec:definition:partition}

Data center networks are large in scale and have multiple paths, which are shared by a large number of traffic flows~\cite{li2014exr, li2014willow}. 
If the network is treated as a whole when identifying the steady-state, the network will enter the steady-state only after all flows are stable, which is rare. 
Nevertheless, we observe that due to the locality nature of task deployment in data centers, the paths traversed by flows may not interfere with each other, forming multiple sub-networks.

In the scenario of LLM training, the deployment of DP and TP leads to the formation of several non-intersecting data parallel communication groups and tensor parallel communication groups within the network.
Previous work~\cite{simai} has also employed packet-level DES to simulate TP flows. 
The links between DP groups and TP groups do not overlap, as DP groups are generally deployed across different Points of Delivery (PoDs), while TP groups are deployed within a high-bandwidth domain. 
Moreover, both types of communication are confined within their respective communication domains. 
Consequently, the communication within DP groups and TP groups can be considered as occurring in different relatively independent sub-networks.
Obviously, the states of these sub-networks do not affect each other, and identifying their states separately will avoid the steady-state being missed.

To capitalize on this characteristic, we formulate a formal definition of \textit{network partitioning}.

\begin{definition}[Network Partition]
Flows sharing the same port (or link), along with the ports and links through which these flows traverse, construct a network partition.
\end{definition}

\begin{figure}[t]
	\centering
	\begin{minipage}{.65\linewidth}
		\subfloat[An example of network partitions]{
		\includegraphics[width=\columnwidth]{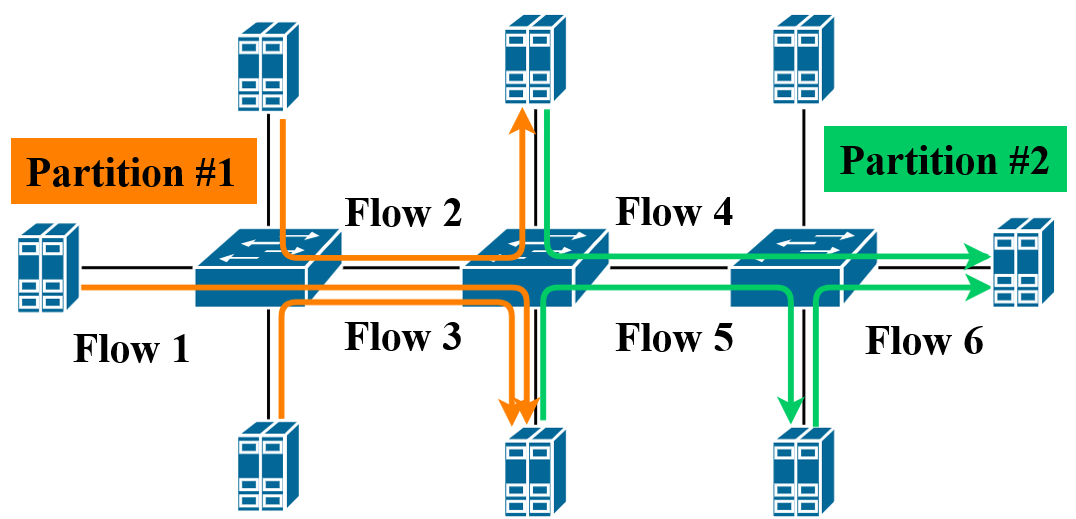} 
		\label{fig:partition}
		}
	\end{minipage}
	\hspace{0.1in}
	\begin{minipage}{.25\linewidth}
		\subfloat[Examples of Flow Conflict Graph]{
		\includegraphics[width=\columnwidth]{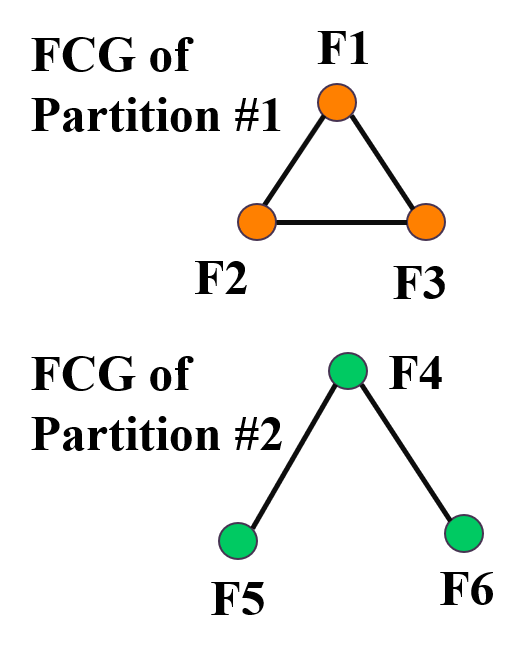} 
		\label{fig:FCG}
		}
	\end{minipage}
	\caption{Examples of network partitions and the corresponding Flow Conflict Graphs (FCGs).}
	\label{fig:partition_FCG}

\end{figure}

As demonstrated in Figure~\ref{fig:partition}, we segment flows and links so that no two flows in different partitions share common links. 
We define network partitions based on flow intersections at switch ports rather than entire switches. 
We reject switch-centric partitioning because individual ports on a switch exhibit minimal mutual interference despite co-location. 
By isolating flows that traverse shared ports into distinct partitions, our method achieves three key benefits: (1) finer-grained parallelism through contention-aware division, (2) elimination of hidden dependencies between logically separated flows, and (3) increased probability of finding steady-states.
Notably, in MoE training workloads, spatial locality of all-to-all traffic and typical EP degrees ($\leq$ 64)~\cite{switch-transformers, deepseek-v3} inherently constrain partition sizes no more than 128 GPUs.


\subsubsection{Steady-state}
\label{sec:definition:steady_state}

Training LLMs in data center networks generates numerous elephant flows with sizes larger than 1GB, leading to a substantial number of discrete events. 
Conducting packet-level simulation for all these discrete events results in an enormous computational overhead, rendering the simulation process inefficient. 
We discover that after being regulated by congestion control algorithms, the flow rates of these elephant flows can stabilize within a relatively narrow range, exhibiting a trend of gradual convergence or periodicity. 
Furthermore, we posit that skipping these events during the simulation process has a negligible impact on the simulation outcomes. 

\begin{figure}[t]
	\centering
	\includegraphics[width=0.8\columnwidth]{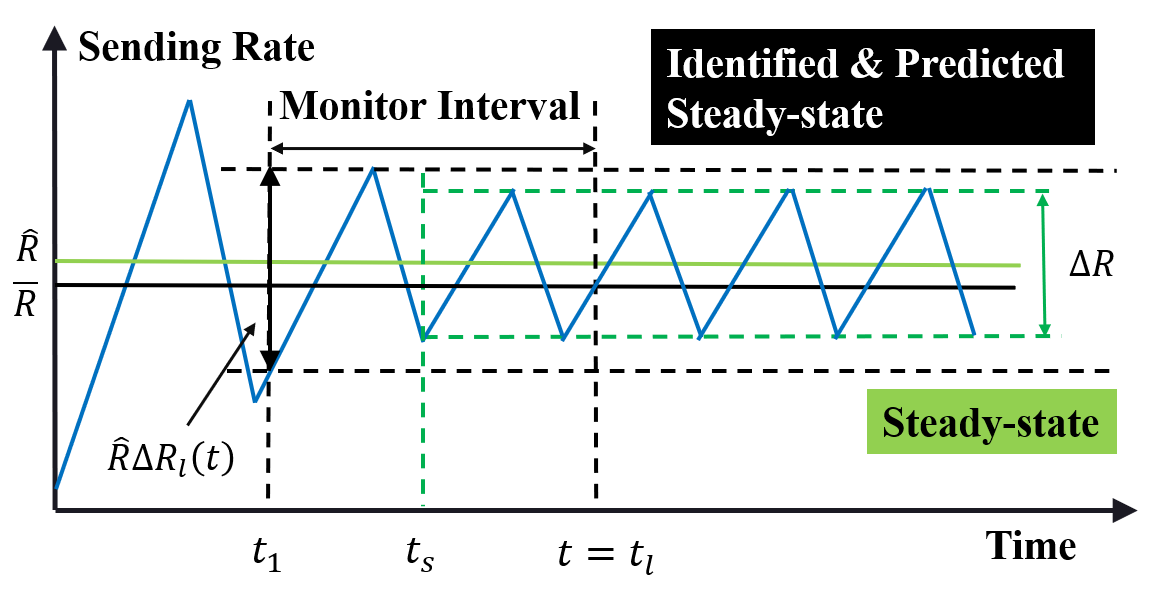}
	\caption{Flow sending rate during CCA convergence.}
	\label{fig:steady_state}
\end{figure}

To ensure the reliability and efficiency of the simulation process, the steady-state period of the flows must be both sustainable and predictable, characterized by repeatability and periodicity. 
When a flow is in its steady-state, various metrics associated with it should exhibit minimal fluctuations, confined within a narrow range.
Therefore, we formally define the \textit{steady-state} as follows:

\begin{definition}[Traffic Steady-state]
A flow is considered in a steady-state between times $t_s$ and $t_f$ if and only if the fluctuations of a set of key metrics are confined within a narrow range. 
Mathematically, this condition can be expressed as:
\begin{align}
\label{formula:metric_fluc}
\Delta X &= \max_{\substack{t_s \leq t \leq t_f}}\{X(t)\} - \min_{\substack{t_s \leq t \leq t_f}}\{X(t)\} \\
\label{formula:metric_stable}
\Delta X &< \epsilon_X
\end{align}
where $X$ represents a suite of flow-related metrics, including sending rate $R$, congestion window size $cwnd$, round-trip time RTT, queue length $Q$, and in-flight bytes $I$. 
Here, $\epsilon_X$ is a small constant threshold of the metric $X$ introduced to accommodate the inherent periodicity and minor oscillations observed in the metrics. 
\end{definition}
This definition of the steady-state captures the essential characteristics of flow behavior during periods of stability, ensuring that the metrics exhibit minimal variation over the specified interval. 
As shown in Figure~\ref{fig:steady_state}, the steady-state is considered to start at $t_s$ when $\Delta R$ reaches the threshold value.
Here, the time interval for defining the steady-state should be based on the convergence of the congestion control algorithms.
Additionally, a network partition is considered to be in steady-state if and only if all flows within it have attained steady-state; otherwise it is considered to be in \textit{unsteady-state}.

\subsection{Workflow}

\begin{figure}[t]
	\centering
	\includegraphics[width=0.98\columnwidth]{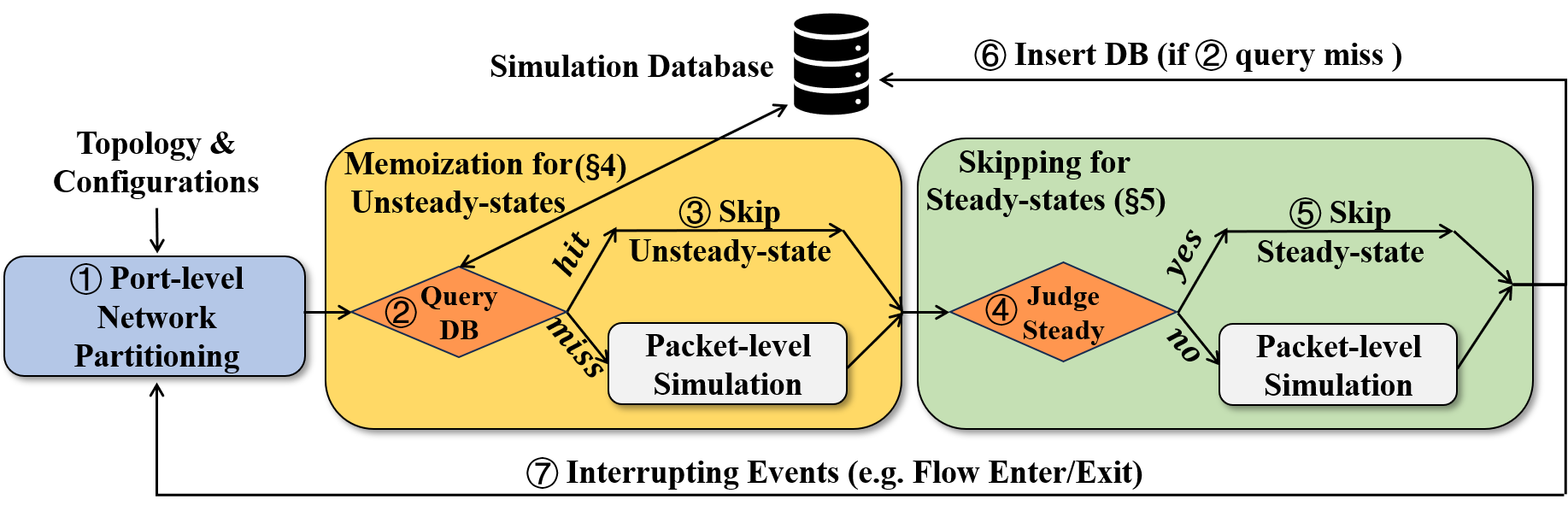}
	\caption{\sys's high-level workflow.}
	\label{fig:workflow_memoization}
\end{figure}

The workflow of \sys is illustrated in Figure~\ref{fig:workflow_memoization}.
Upon user input of the topology, configurations, \ding{172} \sys first performs the network partitioning at port level ($\S$\ref{sec:design:algorithm}), which divides the flow into several disjoint sets, facilitating subsequent processing.

Then, \ding{173} \sys queries a simulation database based on the current partition information to check for the existence of simulation results for the identical scenario. 
The simulation database adopts the principle of memoization~\cite{michie1968memo, ford2002packrat, ford2004parsing}, recording once-simulated scenarios and their key outcomes as reusable knowledge ($\S$\ref{sec:design:FCG_retrieve}). 
If such results are available, \ding{174} \sys bypasses the transient phase and transitions directly to the steady-state processing logic. 
Conversely, if no matching results are found, \sys proceeds with the packet-level simulation.

During the process of skipping the unsteady-state or employing packet simulation, \ding{175} \sys employs a steady-state identification algorithm to determine whether the system has reached a steady-state The simulation database adopts the principle of memoization, recording once-simulated scenarios and their key outcomes as reusable knowledge ($\S$\ref{sec:design:ident:ident}). 
If a partition enters a steady-state, \ding{176} \sys skips this steady-state period; otherwise, it continues with the packet-level simulation. 
When the steady-state of the network partition ends, or when the simulation of the network partition terminates at the packet-level granularity, \ding{177} if \sys misses in the previous query, it encounters a novel simulation scenario and is required to insert the key information of the current simulation situation into the database for subsequent simulation processes ($\S$\ref{sec:design:FCG_store}). 
Subsequently, \ding{178} \sys continues to process relevant interruptive events, including flow enter and flow exit, which alter the current network partitioning pattern($\S$\ref{sec:design:flow}). 
Therefore, the network partitioning algorithm needs to be invoked again to restart this cyclical process.



\section{Memoization for Unsteady-states}\label{sec:memoization}
In this section, we first present the definition of the \textit{Flow Conflict Graph (FCG)} as the central abstraction for representing network partitions in unsteady-states.
We then describe the mechanisms for storing unsteady processes using FCGs, followed by the procedure for and retrieving and bypassing unsteady-states through memoization.

\subsection{Network Partitioning Algorithm}
\label{sec:design:algorithm}

During the processing of interrupting events, the network partitions the entire network into several network partitions based on the topological overlap relationships of the flows.
We achieve this partitioning by constructing a bipartite graph where vertices represent flows and links, creating edges between a flow vertex and a link vertex if the flow traverses that link. The depth-first search (DFS)~\cite{DFS} is then used to identify connected components (partitions) within this graph. 

Appendix~\ref{appendix:network_partitioning_algorithm} describes the network partitioning algorithm used in \sys.
Given that each node is visited at most once, the complexity of the algorithm is $O(N+M)$, where $N$ represents the number of flows and $M$ represents the number of links in the network. 
When a new flow starts, some existing network partitions may be merged. 
When an existing flow finishes, an existing network partition may be split. 
In response to the dynamic reconstruction characteristics of network partitions, we have proposed an incremental network partitioning algorithm, which is detailed in Appendix~\ref{appendix:incremental}.

\subsection{Flow Conflict Graph}
\label{sec:design:FCG}

When a network partition operates in an unsteady-state, its flows are regulated by CCAs until the rates converge. 
Due to the diversity of partition topologies and the inherent complexity of CCA dynamics, the transient evolution of flow rates is analytically intractable. 
However, as discussed in \cref{sec:definition:partition}, the training traffic of large-scale learning models often exhibits strong locality and symmetry, resulting in recurring partition structures across simulations.
This recurrence motivates the use of memoization: the simulator can identify previously encountered cases and reuse the results instead of re-executing identical transient processes.

To support efficient detection and reuse of repeated cases, we need a lightweight yet expressive representation of simulation state. 
We define the \textit{Flow Conflict Graph (FCG)} as such an abstraction. 
FCG models the conflict relationships among flows within a partition as an undirected graph.
Vertices represent flows and edges denote link-sharing dependencies.
If two flows traverse at least one common link, an edge is placed between them. 
To encode transient information, the vertex weights correspond to instantaneous flow rates, while the edge weights represent the number of overlapping links between the corresponding flows. 
Note that FCG deliberately ignores the absolute path length of flows and their spatial positions in the topology as the resulting error is negligible.

This design ensures that FCGs capture both the structural and quantitative features of unsteady partitions, providing a formalized foundation for storage, retrieval, and reuse, which greatly reduces the storage footprint and accelerates retrieval. 
It creates a canonical representation that makes different simulation runs comparable, enabling reliable detection of recurring patterns.

\subsection{Storing Unsteady-states}
\label{sec:design:FCG_store}

Upon the initialization of a partition, the simulator constructs an FCG that characterizes its starting condition, which is denoted as $FCG_{start}$.
The steady-state identification algorithm (\cref{sec:design:ident:ident}) subsequently monitors the evolution of the partition and records a corresponding $FCG_{end}$ when a steady-state terminates or the partition encounters a flow completion event.

Simulation database then stores this process in a key–value format, where the key is the starting condition $FCG_{start}$ and the value is a structured tuple containing the essential results of the transient phase:
\begin{align}
key &: \ FCG_{start} \nonumber \\
value &: \ (FCG_{end}, \; \{Size_{f}\}|_{f\in F}, \; T_{conv}) \nonumber
\end{align}
Here, $FCG_{end}$ denotes the FCG at the end of the transient phase, $\{Size_{f}\}|_{f\in F}$ represents the aggregate transmission volume of flows $f$ in the partition during the unsteady period, and $T_{conv}$ is the measured convergence time.

Simulation database does not preserve the full temporal evolution of a network partition. 
Instead, it decomposes the process into unsteady and steady phases. 
This design follows the observation that flow sizes determine the duration of steady-states, whereas flow sizes remain independent of the transient convergence dynamics. 
Consequently, the database records only network snapshots at the entry and exit of unsteady phases, the steady-state transmission rate of each flow, and the associated convergence time. These quantities are sufficient for reconstructing FCTs at per-flow granularity within network simulations.



\subsection{Retrieving and Skipping Unsteady-states}
\label{sec:design:FCG_retrieve}

When a new partition is formed, the simulator constructs its $FCG_{start}$ and queries the simulation database to determine whether an equivalent unsteady process has already been recorded. 
The lookup process first eliminates candidates with mismatched structural characteristics (\eg different numbers of vertices or edges). 
For the remaining candidates, a weighted graph isomorphism procedure is applied~\cite{cordella2004sub, hagberg2008exploring}. 
If the simulation database returns a matching entry, the simulator bypasses the costly convergence phase of the CCAs.
This mechanism reserves accuracy by using graph-based matching that encodes both structural and quantitative information, ensuring that reused results are semantically equivalent to recomputed ones. 
As a result, this mechanism maintains the correctness while maximizing reuse opportunities.




\section{Fast-forwarding for Steady-states}\label{sec:steady}
In this section, we first present the steady-state identification algorithm.
We then provide an analysis of parameter errors and guidelines for hyper-parameters selection. 
Finally, we discuss the termination conditions of the steady-state.

\subsection{Steady-state Identification}
\label{sec:design:ident}

Training LLMs in data center networks generates numerous elephant flows, where simulating every packet-level event incurs enormous computational overhead. Notably, once regulated by congestion control, these flows typically converge to stable or periodic rates.
Therefore, identifying the steady-states of these flows and bypassing them during simulation can significantly reduce the computational time and resource consumption associated with the simulation process.

Next, we describe the metrics employed for steady-state identification and present a steady-state identification algorithm, along with the corresponding error analysis.

\subsubsection{Metrics for Steady-state Identification}
\label{sec:design:ident:metrics}

In practice, simultaneously calculating the fluctuation ranges of all metrics is redundant and unnecessary~\cite{cceval}.
Therefore, we select the sending rate $R$ as the indicator to assess the steady-state condition.
The theory indicates that these metrics are stable when the sending rate $R$ is stable, which is empirically validated through experiments in $\S$\ref{sec:eval:sensitive}.
Subsequently, we proceed to prove the following theorem.

\begin{theorem}
\label{theorem:rate-metrics}
Within the time interval [$t_s$, $t_f$], the congestion control algorithm converges. Assuming the flow rate is stable, \ie it satisfies $\Delta R < \epsilon_R$, then for $cwnd$, RTT, $Q$, and $I$, they are also stable.
\begin{align}
\label{formula:r_tot_stable}
\Delta R &< \epsilon_R, \quad t_s \leq t \leq t_f
\end{align}
This means when Equation ~\ref{formula:r_tot_stable} holds, there exist small constants $\epsilon_{cwnd}$, $\epsilon_{RTT}$, $\epsilon_Q$, and $\epsilon_I$ such that the following inequalities hold, according to the definition of the steady-state of the flow and the stability of the metrics from Equation ~\ref{formula:metric_fluc}:
\begin{align}
    \Delta cwnd < \epsilon_{cwnd}, \quad \Delta RTT < \epsilon_{RTT}, \quad \Delta Q < \epsilon_Q, \quad \Delta I < \epsilon_I
\end{align}
\end{theorem}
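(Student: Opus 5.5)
The plan is to pass the smallness of $\Delta R$ to each remaining metric through the standard fluid-model relations that hold for window- and rate-based CCAs once they have converged. We work on $[t_s,t_f]$ and write $R(t)\in[\bar R-\epsilon_R/2,\bar R+\epsilon_R/2]$ for some mean rate $\bar R$.

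The three relations we rely on are:
\begin{align*}
R(t) &= \frac{cwnd(t)}{RTT(t)}, \\
RTT(t) &= RTT_0 + \frac{Q(t)}{C}, \\
I(t) &= R(t)\,RTT(t).
\end{align*}
Here $RTT_0$ is the base propagation delay of the path, $C$ is the bottleneck capacity, and the last relation is Little's law. For rate-based schemes such as DCQCN, $cwnd$ is defined implicitly as $R\cdot RTT$. Throughout, we assume that the partition's set of flows is fixed on the interval (no flow enter or exit), which is implicit in the steady-state definition.

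The proof then proceeds in four steps.

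\textbf{Step 1 (queue).} The bottleneck queue evolves as
\[
\dot Q(t)=\sum_{f} R_f(t)-C .
\]
Because every flow in the partition satisfies the rate condition, the aggregate input deviates from its mean by at most $n\epsilon_R$, where $n$ is the number of flows. Convergence of the CCA means the time-averaged net input is zero; otherwise $Q$ would drift unboundedly or empty, contradicting convergence. Integrating over any sub-interval therefore gives
\[
\Delta Q \le n\epsilon_R\, T_{osc},
\]
where $T_{osc}$ is the control-loop period, on the order of one RTT and bounded by the monitoring interval. We set $\epsilon_Q:=n\epsilon_R T_{osc}$.

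\textbf{Step 2 (RTT).} From the RTT relation,
\[
\Delta RTT=\frac{\Delta Q}{C}<\frac{\epsilon_Q}{C}=:\epsilon_{RTT}.
\]

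\textbf{Step 3 ($cwnd$ and $I$).} Both quantities are the product $R\cdot RTT$, so
\[
\Delta(R\,RTT)\le RTT_{\max}\,\epsilon_R+R_{\max}\,\epsilon_{RTT},
\]
which gives $\epsilon_{cwnd}=\epsilon_I$ equal to that bound. Here $R_{\max}$ is at most the line rate, and $RTT_{\max}\le RTT_0+Q_{\max}/C$ is bounded by the buffer size.

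\textbf{Step 4 (per-CCA check).} We verify that the relations above are compatible with the fixed-point equations of DCTCP, DCQCN, TIMELY and HPCC, so that convergence yields a bounded equilibrium around which the Step 1 integral argument applies. HPCC, for instance, targets the in-flight bytes $I\approx\eta\, C\, RTT_0$ directly, which reinforces Step 3.

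The main obstacle is Step 1. Bounding $R$ alone does not bound $Q$, since a persistent small excess rate makes the queue grow linearly. The assumption that the CCA has converged must therefore carry the weight: it rules out drift, so that only oscillation within a bounded period remains. I would first try to formalize convergence as the existence of an equilibrium $Q^*$ with $\lvert Q(t)-Q^*\rvert$ controlled by the rate deviation through the CCA's ECN/RTT feedback gain. Concretely, this means linearizing each CCA's feedback law $R=g(Q)$ near $Q^*$ and inverting it, which gives
\[
\Delta Q\lesssim \frac{\epsilon_R}{\lvert g'(Q^*)\rvert}.
\]
This alternative bound avoids the period argument, whenever the feedback is strictly monotone in the queue. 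Where the marking is a step function and $g'$ is degenerate (e.g.\ DCTCP with a single threshold), I would fall back to the period-based bound of Step 1.
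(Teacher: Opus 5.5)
Your proposal is sound at the level of rigor the paper itself uses, but it takes a different route. The paper bounds each metric separately by substituting into a model equation borrowed from a specific CCA:
HPCC's $cwnd=R\,\overline{RTT}$, with the smoothed RTT held fixed, gives $\epsilon_{cwnd}=\epsilon_R\overline{RTT}$.
TIMELY's gradient-driven update, with the RTT range identified with the change from a single adjustment step, gives $\epsilon_{RTT}=\max(\epsilon_R/(\beta\overline{R}),\,\delta\Delta t'/C)$.
DCTCP's window-conservation identity $Q=NW-C\times RTT$, together with $W=R\times RTT$, gives $\epsilon_Q=N\cdot RTT\cdot\epsilon_R$.
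HPCC's $I=Q+R\,RTT_{base}$ gives $\epsilon_I=\epsilon_Q+RTT_{base}\,\epsilon_R$.

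Because $Q$ is thereby tied algebraically to the rates, the paper gets constants linear in $\epsilon_R$, with no oscillation period, and never has to address drift. The cost is stitching together three different CCAs' models and holding RTT fixed in the window--rate relation.

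Your single fluid model, with the dependency chain $Q\to RTT\to(cwnd,I)$, is CCA-agnostic. It makes explicit that a small but persistent rate excess would make $Q$ ramp, so the convergence hypothesis must carry the weight. It also keeps the $R_{\max}\,\epsilon_{RTT}$ term in $\Delta cwnd$ that the paper's fixed-RTT substitution drops. The price is that your $\epsilon_Q$ scales with an oscillation period. For DCTCP that period is the sawtooth period $T_C$, which is many RTTs rather than one, although your cap by the monitoring interval absorbs this.

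To make Step 1 airtight, state convergence as zero net input over every window of length $T_{osc}$ (a limit cycle). A zero average over all of $[t_s,t_f]$ only gives $\Delta Q\le n\epsilon_R(t_f-t_s)$. Note also that both arguments need every flow at the bottleneck to be stable, not just the single flow in the hypothesis. The paper assumes this implicitly by taking $N$ identical flows with a common $W(t)$; you do so through the partition-level steady-state.
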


We provide the proof in Appendix ~\ref{appendix:proof_metrics}.
Theorem ~\ref{theorem:rate-metrics} indicates that when $R$ is stable, other metrics are stable as well, further corroborating the stability of CCAs~\cite{hpcc,dcqcn,timely,dctcp}. 

\subsubsection{Steady-state Identification Algorithm}
\label{sec:design:ident:ident}

In accordance with the definition of the steady-state, our goal is to construct an algorithm that can identify the steady-state during CCA converges, while minimizing the resulting error at the same time.
In Theorem ~\ref{theorem:rate-metrics}, we identify and derive the desirable property that multiple metrics simultaneously achieve stability in the steady-state, which enables us to construct a unified steady-state identification algorithm based on a single metric: the sending rate $R$.


The steady-state identification algorithm is designed to accurately determine whether the flow has entered a steady-state based on a short segment of local data. 
This approach necessitates that the flow is considered to be in a steady-state once the fluctuations in the local data stabilize, and it remains stable until the flow completion or the occurrence of other interrupting events.
Mathematically, this condition can be expressed as:
\begin{align}
\label{formula:stable_holds}
\Delta R &< \theta, \quad when \ \max\limits_{1 \leq k \leq l}\{R(t_k)\} - \min\limits_{1 \leq k \leq l}\{R(t_k)\} < \theta
\end{align}
where $\theta$ and $l$ are two appropriately set hyperparameters in advance, and $t_1 < t_2 < \dots < t_l = t$ are a series of time instances at which the sampling of transmission rates occurs.
Equation ~\ref{formula:stable_holds} holds as long as the CCA is convergent, as the CCA ensures that the flow rate remains constant over time or exhibits a sawtooth pattern upon convergence.
With the theoretical guidance provided by Equation ~\ref{formula:stable_holds}, steady-states can be accurately identified when $l$ and $\theta$ are reasonable.


\parab{Identification algorithm.}
To determine whether a flow is in a steady-state, we compare the fluctuation of the flow rate with a predefined threshold. 
By maintaining a fixed length rate detection interval $l$, we can collect flow rate data over a specified period, which allows us to calculate the fluctuation of the flow rate.
To enhance the criteria for determining the steady-state, we introduce a condition that the relative fluctuation must be smaller than a predefined threshold $\theta$.
For the rate detection interval $t_1 < t_2 < ... < t_l = t$, the fluctuation degree $\Delta R_l(t)$ of the flow rate is calculated as follows:
\begin{align}
\label{formula:relative_fluc}
\Delta R_l(t) &= \frac{\max\limits_{1 \leq k \leq l}\{R(t_k)\} - \min\limits_{1 \leq k \leq l}\{R(t_k)\}}{(\sum\limits_{k=1}^{l}R(t_k)) /l}
\end{align}
When the condition $\Delta R_l(t) < \theta$ is satisfied for the predefined threshold $\theta$, the flow can be identified as having entered the steady period, as shown in Figure ~\ref{fig:steady_state}.

\parab{Rate estimation.}
After the steady-state has been identified, we should estimate the rate of the flow in the steady-state, which will affect the accuracy of the FCT. 
The max-min fair rate allocation algorithm~\cite{jaffe1981bottleneck} cannot be used because in multi-hop congestion scenarios, the converged flow rates may deviate from max-min fairness as revealed by a previous work~\cite{wang2023poseidon}.
The criterion of PLDES is to correctly simulate the unique process of various algorithms, rather than simulating the most ideal network situation. 
We set the rate during the steady period as the average rate over this interval:
\begin{align}
\label{formula:estimated_r}
\hat{R} &= \frac{\sum\limits_{k=1}^{l}R(t_k)}{l}
\end{align}
The finish time of the steady period is described in \cref{sec:design:flow}.





\subsection{Error Analysis and Threshold Guidance}
\label{sec:design:ident:error_analysis}

The steady-state identification algorithm utilizes $\Delta R_l(t)$ to approximate $\Delta R$, and $\hat{R}$ to approximate the real steady-state average rate $\overline{R}$, which may introduce errors in the rate.
We now analyze the upper bounds of the errors resulting from these approximations.

\parab{Error of estimating the sending rate.}
We first consider the error in estimating the real steady-state average rate $\overline{R}$ using Equation ~\ref{formula:estimated_r} within the steady-state interval by the following theorem.
\begin{theorem}
\label{theorem:err_r_bound}
The relative error of stable flow rate estimation is bounded. Specifically,
\begin{align}
\hat{\epsilon_{R}} &= |\frac{\hat{R}-\overline{R}}{\overline{R}}| < \frac{\theta}{1 - \theta} \quad when \ \Delta R_l(t) < \theta
\end{align}
\end{theorem}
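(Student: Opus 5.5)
The plan is to reduce the statement to a comparison between $\hat{R}$ and $\overline{R}$, with both quantities squeezed inside the envelope $[R_{\min},R_{\max}]$ of the sampled rates. Here $R_{\min}=\min_{1\le k\le l}R(t_k)$ and $R_{\max}=\max_{1\le k\le l}R(t_k)$. From Equation~\ref{formula:relative_fluc}, the hypothesis $\Delta R_l(t)<\theta$ says exactly that $R_{\max}-R_{\min}<\theta\hat{R}$, since the denominator of $\Delta R_l(t)$ is $\hat{R}$ by Equation~\ref{formula:estimated_r}. The estimate $\hat{R}$ is an average of the samples, so it automatically lies in $[R_{\min},R_{\max}]$.

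The key modelling step is to argue that the true steady-state average $\overline{R}$ also lies in this envelope, possibly enlarged by at most the same fluctuation width. I would take this from Equation~\ref{formula:stable_holds}: once the local window satisfies the threshold, the converged CCA keeps the rate within a band of width less than $\theta$ (relative to $\hat{R}$) for the rest of the steady period. I would make this concrete by assuming that every value $R(t)$ on the steady interval, and hence its time average $\overline{R}$, lies in the same band $[R_{\min},R_{\max}]$. This band is the one containing the window samples; this is the step that uses the sawtooth/constant behaviour of a converged CCA. Granting it, we get $|\hat{R}-\overline{R}|\le R_{\max}-R_{\min}<\theta\hat{R}$.

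Next, convert this into a bound relative to $\overline{R}$ rather than $\hat{R}$. From $|\hat{R}-\overline{R}|<\theta\hat{R}$ we get $\hat{R}<\overline{R}+\theta\hat{R}$, so $\hat{R}(1-\theta)<\overline{R}$, that is, $\hat{R}<\overline{R}/(1-\theta)$. This requires $\theta<1$, which is implicit since $\theta$ is a small threshold. Substituting back gives $|\hat{R}-\overline{R}|<\theta\overline{R}/(1-\theta)$. Dividing by $\overline{R}>0$ then yields $\hat{\epsilon_R}<\theta/(1-\theta)$, which is the claimed bound.

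The main obstacle is the second step: justifying that $\overline{R}$ stays inside the band observed in the sampling window. The statement is only meaningful under the steady-state premise of Equation~\ref{formula:stable_holds}, so I would state it explicitly as the assumption that the window's fluctuation band persists over $[t_s,t_f]$. If one only assumes the weaker condition that the total fluctuation over the period is below $\theta\hat{R}$, the same argument still goes through. In that case $\hat{R}$ and $\overline{R}$ both lie in an interval of width less than $\theta\hat{R}$, and the algebra above is unchanged.
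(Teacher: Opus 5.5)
Your proof is correct and is essentially the paper's argument. The paper also rests on the (asserted, not derived) band assumption $|R(t)-\overline{R}|\le R_{\max}-R_{\min}<\theta\hat{R}$ for every $t$ in the steady period, averages it over the samples to get $|\hat{R}-\overline{R}|<\theta\hat{R}$, and solves for the two-sided range $\frac{1}{1+\theta}<\hat{R}/\overline{R}<\frac{1}{1-\theta}$ that it reuses in Theorem~\ref{theorem:err_t_bound}. You instead place $\hat{R}$ in the band by convexity and derive only the side you need.

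One small fix: drop the phrase ``possibly enlarged by at most the same fluctuation width'' from your first paragraph. An enlarged band would only give $|\hat{R}-\overline{R}|<2\theta\hat{R}$, whereas the same-band assumption you actually use is exactly what the stated bound requires.
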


We provide the proof in Appendix ~\ref{appendix:proof_error_rate}.

\parab{Error of estimating the steady period duration.}
We proceed to analyze the error in estimating the flow completion time. 
In the absence of other interrupt events, the flow will transmit packets at a stable rate after entering the steady-state until all packets of the flow have been sent. 
We analyze the error of the steady-state duration of the flow through the following theorem:
\begin{theorem}
\label{theorem:err_t_bound}
The relative error of steady-state duration is bounded. Specifically,
\begin{align}
\hat{\epsilon_{T}} &= |\frac{\hat{T}-\overline{T}}{\overline{T}}| < \theta \quad when \ \Delta R_l(t) < \theta
\end{align}
where $\overline{T}$ and $\hat{T}$ denote the real and estimated flow steady-state duration.
\end{theorem}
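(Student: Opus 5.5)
Since the flow transmits at a constant rate during the steady period until it finishes, the plan is to write both durations as data volume divided by rate. Let $S$ be the number of bytes still to be sent when the steady-state is identified at time $t$, and assume no interrupting event occurs. Then the real duration is $\overline{T}=S/\overline{R}$, where $\overline{R}$ is the true steady-state average rate, and the estimated duration is $\hat{T}=S/\hat{R}$, where $\hat{R}$ comes from Equation~\ref{formula:estimated_r}. The common factor $S$ cancels in the relative error, giving
\begin{align}
\hat{\epsilon_{T}} = \left|\frac{\hat{T}-\overline{T}}{\overline{T}}\right| = \left|\frac{\overline{R}-\hat{R}}{\hat{R}}\right| .
\end{align}
So the task reduces to bounding the deviation of $\overline{R}$ from $\hat{R}$, now measured relative to $\hat{R}$ instead of $\overline{R}$.

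This normalization is why the bound improves from $\theta/(1-\theta)$ in Theorem~\ref{theorem:err_r_bound} to $\theta$ here, so I will not derive it from that theorem. Instead I will argue directly from the sampled window. By Equation~\ref{formula:stable_holds} and the convergence of the CCA, the rate throughout the steady period stays in the band spanned by the window samples, $R_{\min}=\min_k R(t_k)$ and $R_{\max}=\max_k R(t_k)$. Its time average $\overline{R}$ therefore also lies in $[R_{\min},R_{\max}]$, and so does the sample mean $\hat{R}$. Hence
\begin{align}
|\overline{R}-\hat{R}| \le R_{\max}-R_{\min} = \Delta R_l(t)\,\hat{R} < \theta \hat{R},
\end{align}
where the middle equality is just Equation~\ref{formula:relative_fluc}, whose denominator is exactly $\hat{R}$. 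Dividing by $\hat{R}$ gives $\hat{\epsilon_{T}}<\theta$.

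The main obstacle is justifying that $\overline{R}$ stays inside the observed window band $[R_{\min},R_{\max}]$. This is the same modelling assumption used in the proof of Theorem~\ref{theorem:err_r_bound}: Equation~\ref{formula:stable_holds} extends the window's fluctuation bound to the whole steady period. If a rigorous version only gives $|\overline{R}-\hat{R}|<\theta\hat{R}$ by that route, the conclusion is unchanged, since only this inequality is needed. If the band assumption is weakened so that the deviation can only be bounded relative to $\overline{R}$, I would fall back on Theorem~\ref{theorem:err_r_bound}. That route gives $|\hat{T}-\overline{T}|/\overline{T}=|\overline{R}-\hat{R}|/\hat{R}\le \hat{\epsilon_R}/(1-\hat{\epsilon_R})$, which is weaker than $\theta$. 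So the direct band argument is the intended route, and the unsteady phase and interrupting events are excluded from the argument.
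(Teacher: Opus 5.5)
Your proposal is correct and follows essentially the paper's route. The paper writes the remaining volume as $F=\overline{R}\,\overline{T}=\hat{R}\,\hat{T}$, reduces the error to $|\overline{R}/\hat{R}-1|$, and bounds it with the intermediate inequality $1-\theta<\overline{R}/\hat{R}<1+\theta$ from the proof of Theorem~\ref{theorem:err_r_bound}, which is exactly your $|\overline{R}-\hat{R}|<\theta\hat{R}$. You re-derive that inequality directly from the window band, where the paper averages the per-sample bound $|R(t_k)-\overline{R}|<\theta\hat{R}$, and both versions rest on the same modelling assumption that the window's fluctuation extends over the whole steady period.
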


We provide the proof in Appendix ~\ref{appendix:proof_error_fct}.

Theorems ~\ref{theorem:err_r_bound} and ~\ref{theorem:err_t_bound} indicate that the steady-state identification algorithm possesses a controllable upper bound error in flow rate and duration of the steady period.

\parab{Guidelines for selecting $\theta$ and $l$.}
The selection of the hyper-parameters $\theta$ and $l$ influences the error and the efficiency of steady-state identification.
Thus, in this section, we analyze the conditions for the values of $\theta$ and $l$.

\parab{\quad Range of $\theta$.}
For accurate identification of the steady-state, the fluctuation threshold within the identification interval should be slightly greater than the rate fluctuation within the steady-state, that is,
\begin{align}
\label{formula:theta_range_def}
\theta &\gtrsim \epsilon_{relative}    
\end{align}
Based on the DCTCP~\cite{dctcp} model, the following formulas are derived:
\begin{align}
D = (W^*+1)\frac{\alpha}{2}, \quad \alpha \approx \sqrt{\frac{2}{W^*}}, \quad W^* = \frac{C \times RTT + K}{N}, 
 \nonumber \\
K > \frac{1}{7}(C \times RTT), \quad T_C = \sqrt{\frac{C\times RTT + K}{2N}} 
 \nonumber
\end{align}
Thus, for $\epsilon_{relative}$, we have
\begin{align}
\epsilon_{relative} &= \frac{D}{(max(W)-min(W))/2} = \frac{(W^*+1)\frac{\alpha}{2}}{(W^*+1)(1-\frac{\alpha}{4})} = \frac{2\alpha}{4-\alpha} \nonumber \\
&\approx \frac{\alpha}{2} \approx \sqrt{\frac{1}{2W^*}} = \sqrt{\frac{N}{2(C \times RTT+K)}} < \sqrt{\frac{7N}{16(C \times RTT)}} \nonumber
\end{align}
When Equation ~\ref{formula:theta_range_def} holds, the boundary for $\theta$ is determined as follows:
\begin{align}
\label{formula:theta_range_res}
\theta &\gtrsim \sqrt{\frac{7N}{16(C \times RTT)}}
\end{align}

When $\theta$ is less than $\epsilon_{relative}$, the steady-state identification algorithm may never detect the steady-state, resulting in no acceleration. 
Conversely, when $\theta$ is excessively large, unsteady-states may be misidentified as steady-states, leading to erroneous rate estimation. 
Hence, $\theta$ should be selected as a value slightly greater than, but close to, the relative rate fluctuation within the steady-state.

\parab{\quad Range of $l$.}
To obtain an accurate $\Delta R_l(t)$ in Equation ~\ref{formula:relative_fluc}, the identification interval should be able to cover at least one period of the steady-state rate presentation, which requires $l$ to be sufficiently large such that:
\begin{align}
\label{formula:l_range_def}
\Delta t(l) &\geq T_C
\end{align}
which we have:
\begin{align}
\label{formula:l_range_res}
\Delta t(l) &= t_l - t_1 \geq T_C > \sqrt{\frac{4C \times RTT}{7N}}
\end{align}
When $l< T_C$, the estimated rate will have a significant deviation. 
When $l$ is excessively large, it will reduce the efficiency of entering the steady-state and decrease the acceleration ratio.
Therefore, $l$ should be set to an appropriate multiple of $T_C$.

\subsection{End of Steady-state}
\label{sec:design:flow}
After flows within a network partition reach a steady-state, we analyze and summarize three types of events that can interrupt the steady-state: 1. entry of new flows; 2. completion of existing flows; 3. rerouting of existing flows (due to link failures or load balancing strategies, \etc). These events dictate at which time point the simulation process for that network partition can jump to, initiating packet-level simulations. These interrupt-type events may be known in advance or may occur in real time,  and we discuss how \sys transitions out of a steady-state in each of these scenarios.

\parab{Predetermined interrupt-type events.}
When testing protocols such as congestion control algorithms or load balancing strategies offline, researchers typically construct traffic demands for a period of time in advance and input them into ns-3~\cite{dcqcn,hpcc,clove}, which initiates the sending of flows according to their start time. 
In addition, the timing of link failures can also be predetermined.  We use a queue to store these events with known timestamps, where the minimum timestamp among them determines the end time of the network partition's steady-state.

\parab{Real-time interrupt-type events.}
When simulating network communications for LLM training or constructing real-time digital twins of networks, the arrival of flows is uncertain, and link failures cannot be known in advance. 
Only when these events are actually injected does \sys become aware of whether the steady-state should be terminated. 
In such scenarios, we present a skip-back mechanism: the simulation process of a network partition initially skips to the nearest known timestamp ($t_1$), when a real-time event occurs with a timestamp ($t_2$), and $t_2 < t_1$, the simulation process skips back to $t_2$, thereby ensuring temporal correctness and preventing simulation errors. 
$\S$\ref{sec:implementation:offset} elucidates further details.
\section{Implementation}\label{sec:implementation}
In this section, we dive into the implementation details of \sys. 
We base \sys on ns-3.17~\cite{hpcc-github} and it consists of thousands of lines of code.
Applying \sys to other PLDES simulators (\eg OMNeT++~\cite{omnet}, DONS~\cite{dons}) is straightforward and ongoing. We also address several practical challenges during the implementation. The implementation details of packet pulsing and timestamp offset are shown in Figure ~\ref{fig:packet-pulsing}.

\begin{figure}[t]
    \centering
    \includegraphics[width=0.98\linewidth]{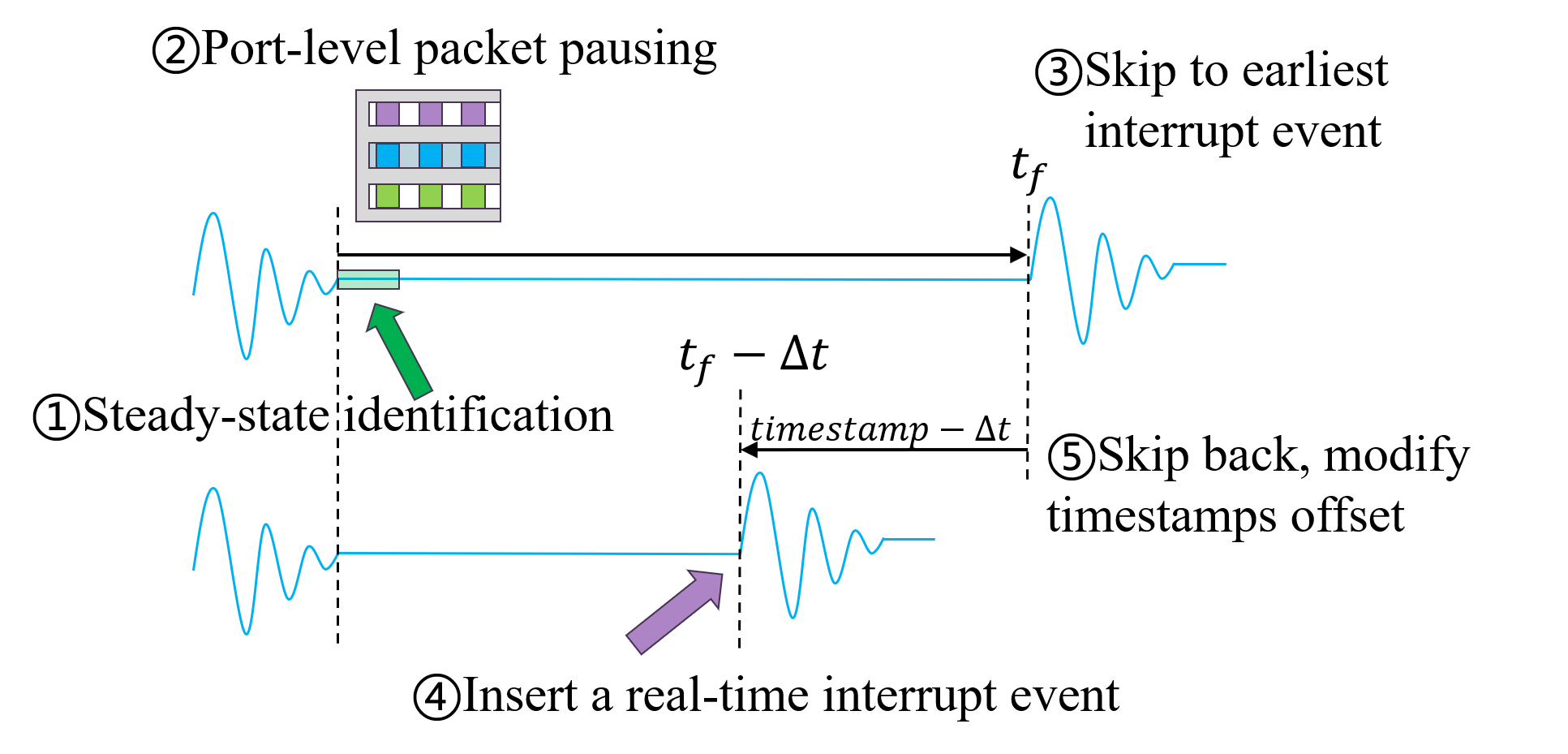}
    \caption{Implementation details of packet pausing and timestamp offsetting.}
    \label{fig:packet-pulsing}
\end{figure}

\subsection{Multi-core Parallelization}
\label{sec:design:parallel}
We find that \sys is fully orthogonal to Unison~\cite{bai2024unison}, the state-of-the-art multithreading optimization for ns-3. 
Unison applies a typical DES parallelization strategy that the system automatically generates multiple logical processes (LPs) given the static network topology. During the simulation process, Unison calculates the load of each LP and adaptively assigns them to multiple threads, running on identical CPU cores. The LP scheduler aims to achieve a relative load balance on multiple cores.


\sys's network partition information can further optimize Unison's parallel efficiency.
To take advantage of this opportunity, we propose a two-stage LP partitioning algorithm. 
Firstly, different LPs are formed according to the network partitions in $\S$\ref{sec:design:algorithm}, since there is no traffic interaction between them, which is more suitable for parallelism with multiple LPs. Unlike Unison, the minimum granularity of the network partitioning is network ports (or interfaces), while the minimum granularity in Unison is switches or hosts.
This fine-grained partitioning reduces the necessity for data synchronization between LPs.

The simulation database implements lightweight concurrency control. Simulation-start queries are parallelized across threads, while simulation-end inserts are coordinated through fine-grained locks. This concurrency control mechanism is enabled by default in the \sysU system in \cref{sec:eval}.

\subsection{Packet Pausing in Switch Port}

The first challenge is to exert steady-states' influence on other parts of the network.
Other nonstable flows certainly belong to different partitions and use different switch ports, so the interaction between these partitions predominantly manifests in the \textit{shared buffer} of the switches. 
For instance, once flows within a network partition stabilize, they might continually occupy a portion of the shared buffer in a switch, diminishing the maximum buffer space available for other ports' flows.
A naive approach is to let these packets be forwarded when the network partition enters a steady-state, because these stable flows will not generate new packets, and then the switch port will not occupy the buffer.
However, inaccuracy will occur within this approach. For example, other flows bypassing the switch may encounter severe congestion and will deplete all shared buffer space. If the buffer size is incorrect, the packet loss timing will be wrong.

We propose an effective scheme to address this challenge. In the steady-state, all metrics within the network partition are almost constant.
Therefore, upon entering steady-state, we pause packet processing for ports in this partition and keep the buffer occupancy constant until exiting steady-state. 
Since these flows continue to occupy the buffer, the maximum buffer size available to other flows will remain the same.

\subsection{Offsetting the Timestamp}
\label{sec:implementation:offset}
The second challenge is to elegantly skip the simulation process within ns-3 without completely reconstructing its underlying architecture.
As \sys has settings for multiple network partitions, each potentially existing in distinct states with varying durations, it is erroneous to effectuate simulation process jumps by altering the global simulation clock. 
Our approach adheres to the maintenance of the regular drive of the simulation clock whereas only adjusting the event timestamps of network partitions while entering a steady-state.
For instance, when a network partition decides to skip to time \(T\) ahead, the timestamps of events (such as packet transmission and message forwarding within switch ports) associated with the specific network partition are increased by \(T\). Meanwhile, the size and sequence number of these flows must also be modified accordingly.

\parab{Skip-back mechanism.}
In the presence of real-time interrupt-type events that cannot be predetermined, \sys will skip the network partition to a known recent interrupt event time (or a relatively large time if there are no predetermined events). 
Subsequently, when the real-time interrupt event is indeed input, \sys then navigates the network partition back to the current time. 
In other words, if a network partition has skipped to time \(T_1\) and necessitates a revert to an earlier time \(T_2\) (\(T_2 < T_1\)).
Prior to the simulation clock reaching \(T_1\), events within this network partition are not processed, rendering the restoration of its state to \(T_2\) uncomplicated. 
\section{Evaluation}\label{sec:eval}
We evaluate \sys in simulation speed, accuracy, and sensitivity. We summarize our results as follows:

\parab{Key Results.}
\begin{icompact}
\item \textbf{Simulation speed:}
\sys is capable of simulating large-scale LLM training workloads.
On a single core, it achieves speedups of 744$\times$ and 510$\times$ for GPT and MoE workloads compared to ns-3. When integrated with parallel DES, it achieves speedups of 1012$\times$ and 716$\times$ for GPT and MoE workloads using 16 CPU cores.
The scalability of \sys is consistent with ns-3.


\item \textbf{Accuracy:} Under various LLM training workloads and CCA scenarios, \sys maintains the average FCT error within 1\% on both GPT and MoE workloads.
\item \textbf{Sensitivity:} 
The monitoring metrics of steady-states identification are nearly equivalent.
\sys exhibits insensitivity across varying hyper-parameters and topologies.

\end{icompact}

\parab{Alternatives.}
We select ns-3~\cite{ns3}, Unison~\cite{bai2024unison}, and flow-level simulator~\cite{simgrid,namyar2024solving} as the comparison alternatives, which represent a range of widely utilized types of simulators.


\begin{table}[t]
\footnotesize
\centering

	\begin{tabular}{c c c c c c}
		\toprule
            \textbf{\# GPUs} & \textbf{\# GPT size, parallel} & \textbf{\# MoE size, parallel} \\
		\midrule
		 64 & 7B, TP8-DP4-PP2 & 8$\times$7B, TP8-EP8-DP4-PP2 \\
         128 & 13B, TP8-DP4-PP4 & 8$\times$13B, TP8-EP8-DP4-PP4 \\
         256 & 22B, TP8-DP8-PP4 & 8$\times$22B, TP8-EP8-DP8-PP4 \\
         1024 & 175B, TP8-DP16-PP8 & 32$\times$22B, TP8-EP8-DP16-PP8 \\
		\bottomrule
	\end{tabular}
	    \caption{Parameters for LLM training workloads.}
	\label{tab:llm_workload}
\end{table}

\parab{Setup.}
The experimental server is a Linux-based server with 2 Intel Xeon CPUs (totaling 56 cores) and 128 GB of memory. 
Simulations conduct 4 Rail Optimized Fat-tree topologies~\cite{nvidia2023superpod} with varying sizes for training GPT-3 models or MoE of different scales for one iteration.
To accurately model the scenario in LLM training networks where multiple NICs on each server may connect to different switches, we represent each GPU as a host in the simulations.

The training workloads consist of groups of DP, PP and EP flows, as existing works on LLM training simulation commonly neglects TP and SP flows~\cite{rashidi2020astrasim, won2023astrasim2, multiverse}
The parallel configurations of GPTs of varying sizes follow the TP(SP)-DP-PP arrangement, whereas MoEs adopt the TP(SP)-EP-DP-PP arrangement, with specific parameters detailed in Table~\ref{tab:llm_workload}.
The micro-batch size is set to be 1, which is the smallest value that still permits pipeline parallelism, yielding a global batch size of DP $\times$ PP.
In all experiments except for the sensitivity analysis, the parameters for \sys are $\theta=5\%$ and $l=2000$.

\subsection{\sys is Fast}
\label{sec:eval:fast}

\begin{figure}[t]
	\centering
	\begin{minipage}{.45\linewidth}
		\subfloat[Speedup under different network sizes]{
		\includegraphics[width=\columnwidth]{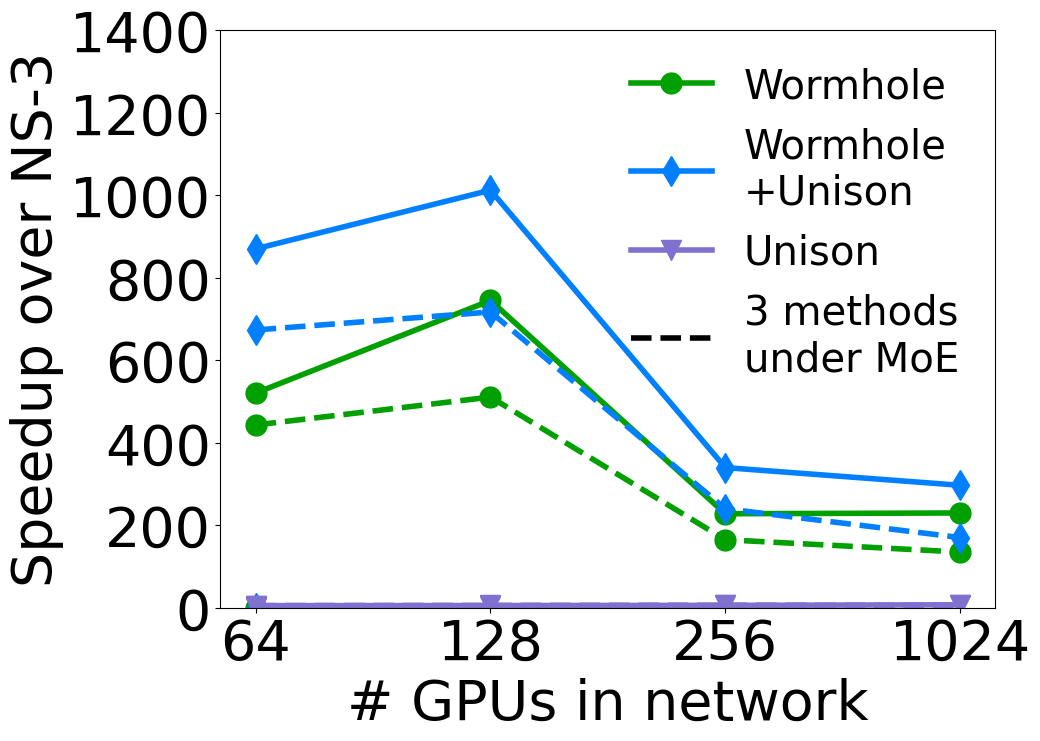} 
		\label{fig:eval:fast:speedup}
		}
	\end{minipage}
	\hspace{0.1in}
	\begin{minipage}{.45\linewidth}
		\subfloat[Speedup under different CCAs]{
		\includegraphics[width=\columnwidth]{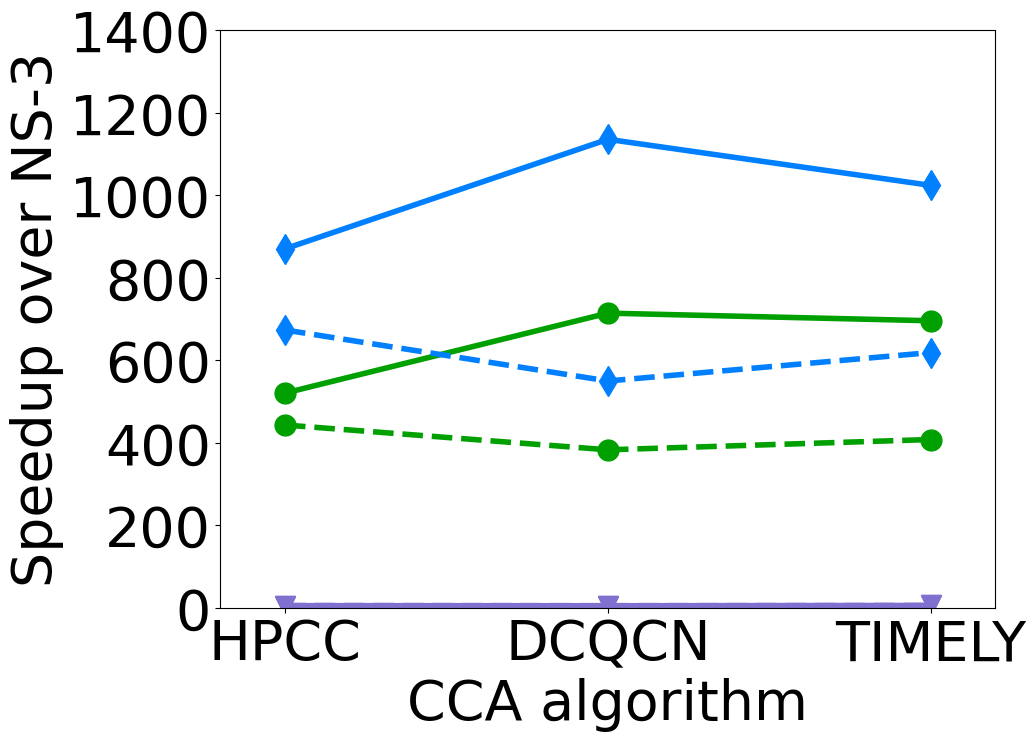} 
		\label{fig:eval:fast:cc}
		}
	\end{minipage}
	\caption{Speedup for simulating LLM training.}
	\label{fig:eval:fast:speed}

        \centering
	\begin{minipage}{.45\linewidth}
		\subfloat[Number of network partitions]{
		\includegraphics[width=\columnwidth]{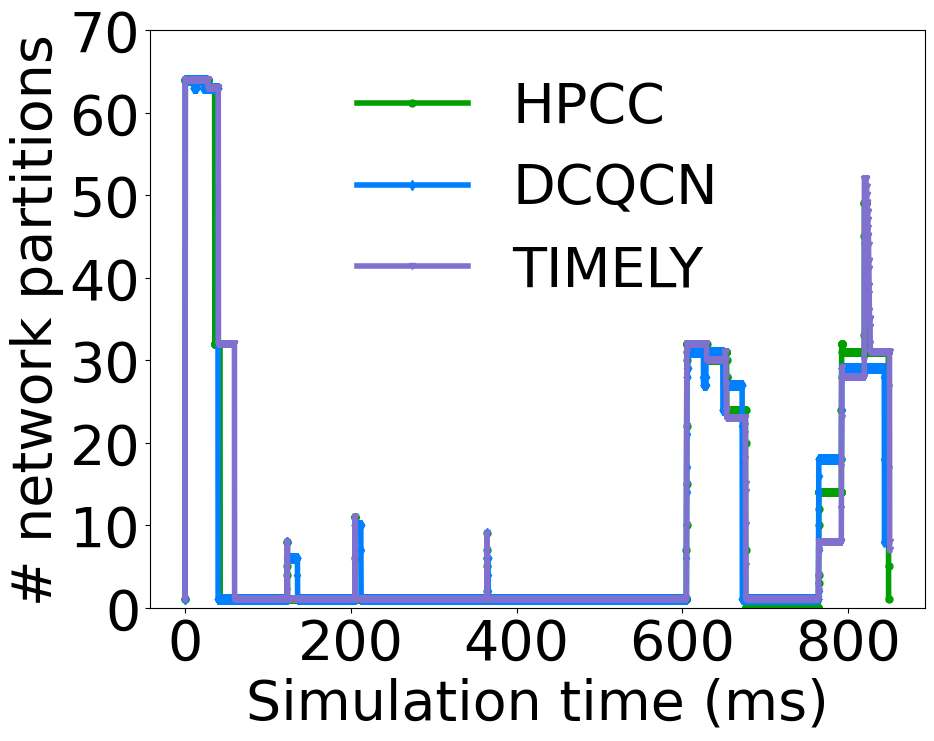} 
		\label{fig:eval:fast:num_segmentation}
		}
	\end{minipage}
	\hspace{0.1in}
	\begin{minipage}{.5\linewidth}
		\subfloat[Storage space of the database]{
		\includegraphics[width=\columnwidth]{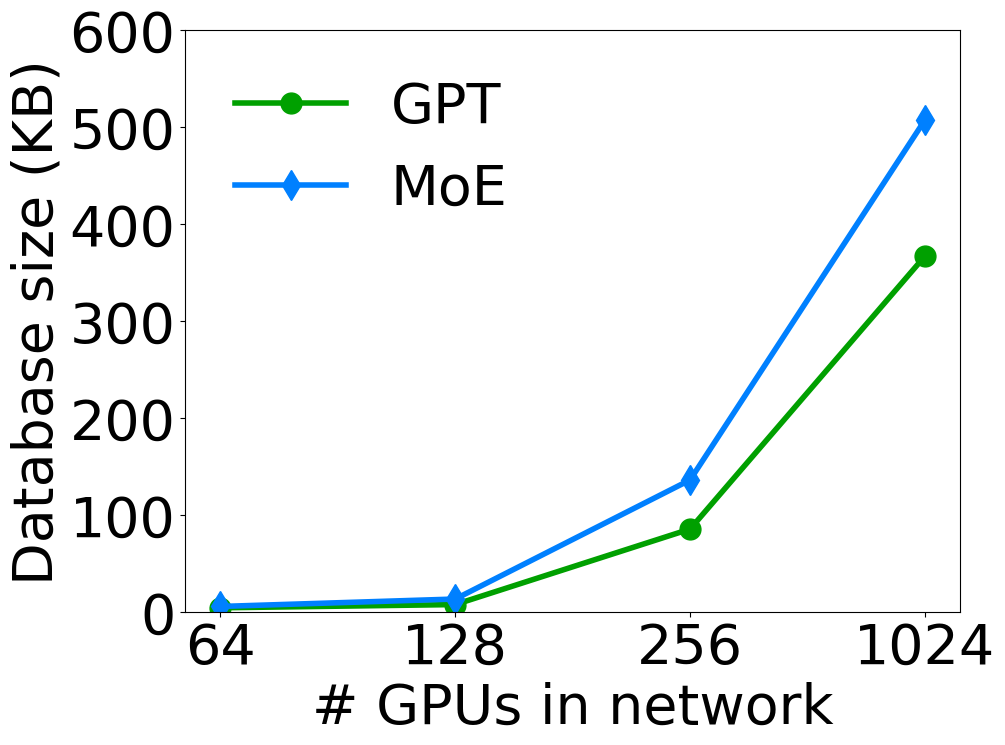} 
		\label{fig:eval:fast:database}
		}
	\end{minipage}
	\caption{Number of network partitions and database size in simulating LLM training.}
	\label{fig:eval:fast:partition_database}
        
	\centering
	\begin{minipage}{.45\linewidth}
		\subfloat[Speedup breakdown]{
		\includegraphics[width=\columnwidth]{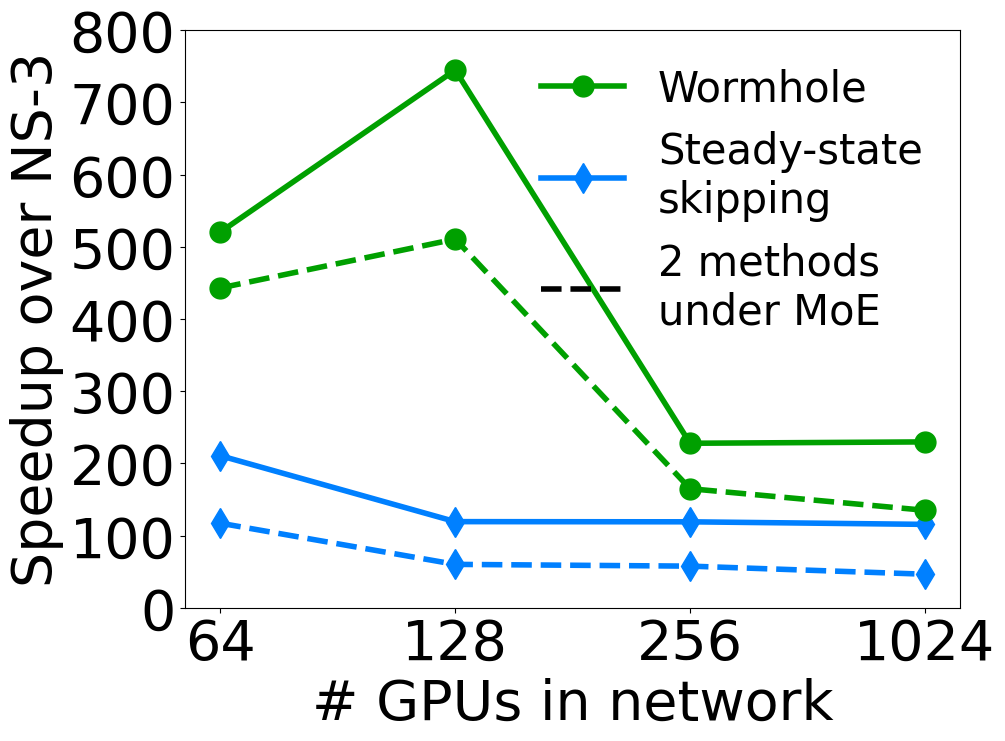} 
		\label{fig:eval:fast:num_steady}
		}
	\end{minipage}
	\begin{minipage}{.48\linewidth}
		\subfloat[Ratio of skipped events]{
		\includegraphics[width=\columnwidth]{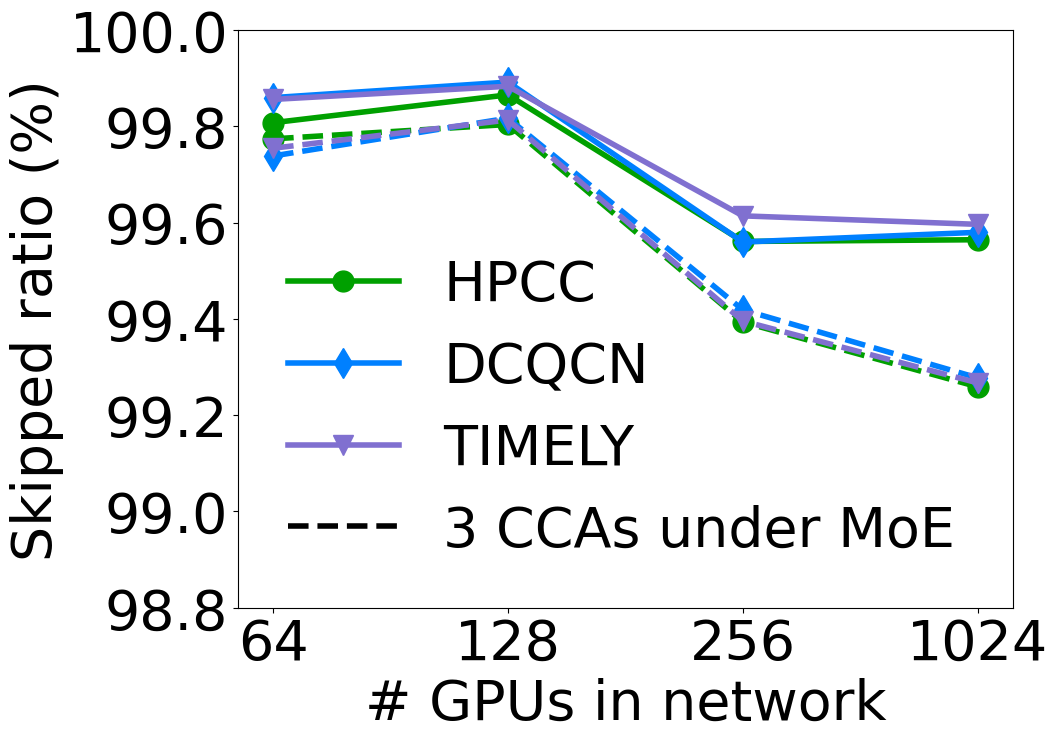} 
		\label{fig:eval:fast:skipped_ratio}
		}
	\end{minipage}
	\caption{Speedup of different mechanism of \sys and ratio of skipped events for simulating LLM training.}
	\label{fig:eval:fast:ablation}  
    
\end{figure}

\begin{figure}[t]
    \centering
	\begin{minipage}{.45\linewidth}
		\subfloat[Average FCT error under different network sizes]{
		\includegraphics[width=\columnwidth]{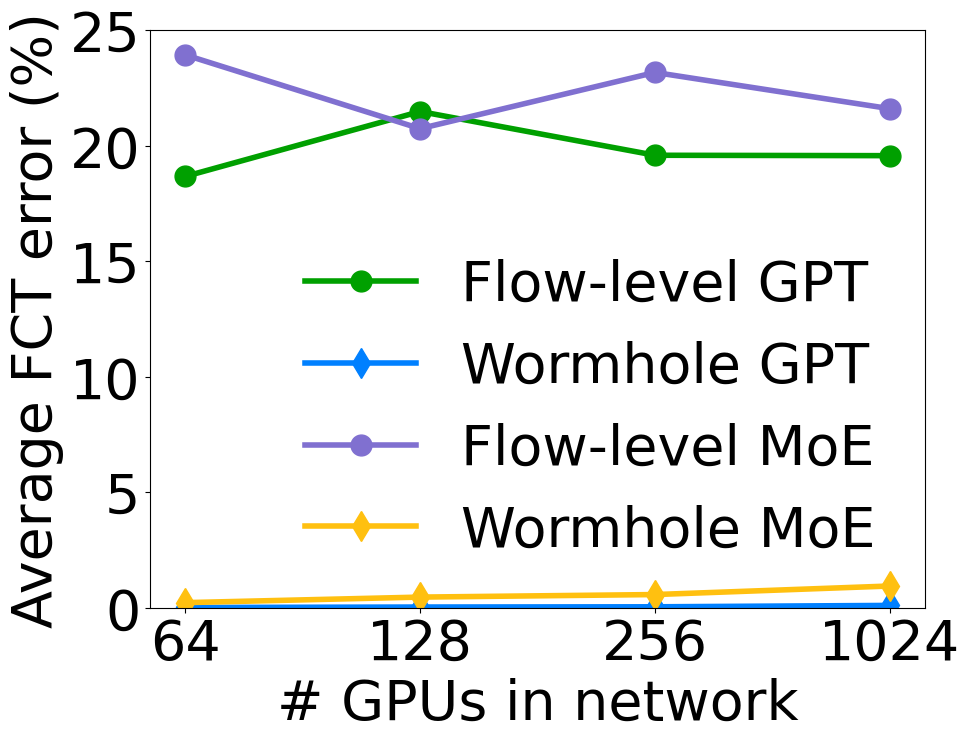} 
		\label{fig:eval:acc:err}
		}
	\end{minipage}
	\hspace{0.1in}
	\begin{minipage}{.45\linewidth}
		\subfloat[Average FCT error under different CCAs]{
		\includegraphics[width=\columnwidth]{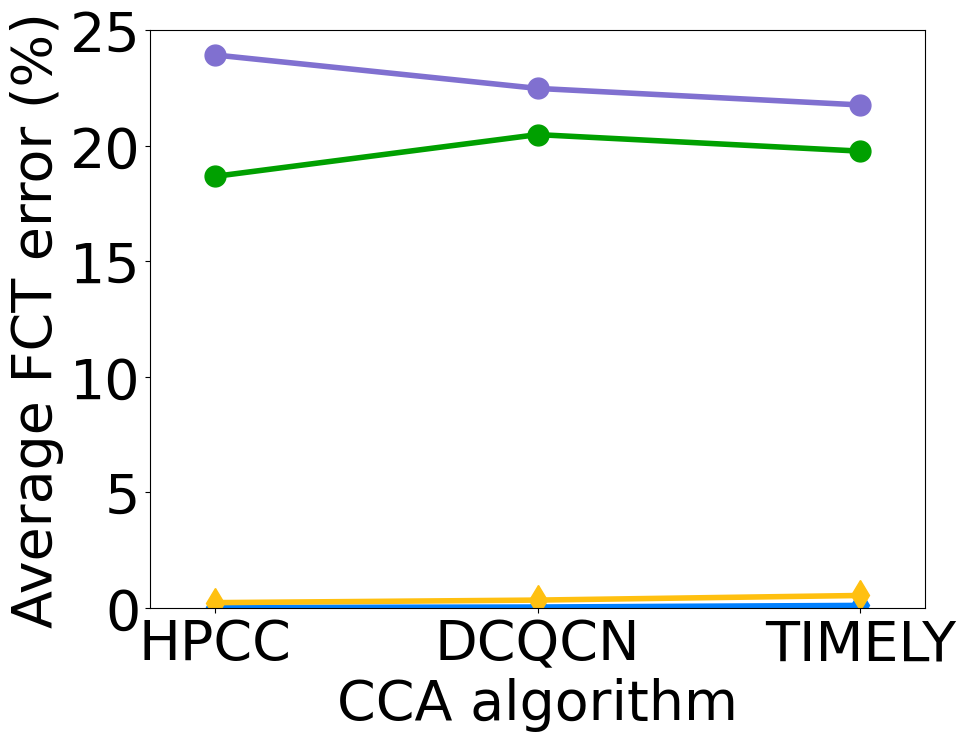} 
		\label{fig:eval:acc:cc}
		}
	\end{minipage}
	\caption{Accuracy for simulating LLM training.}
	\label{fig:eval:acc:error}

\end{figure}

\begin{figure}[t]
	\centering
	\includegraphics[width=0.8\columnwidth]{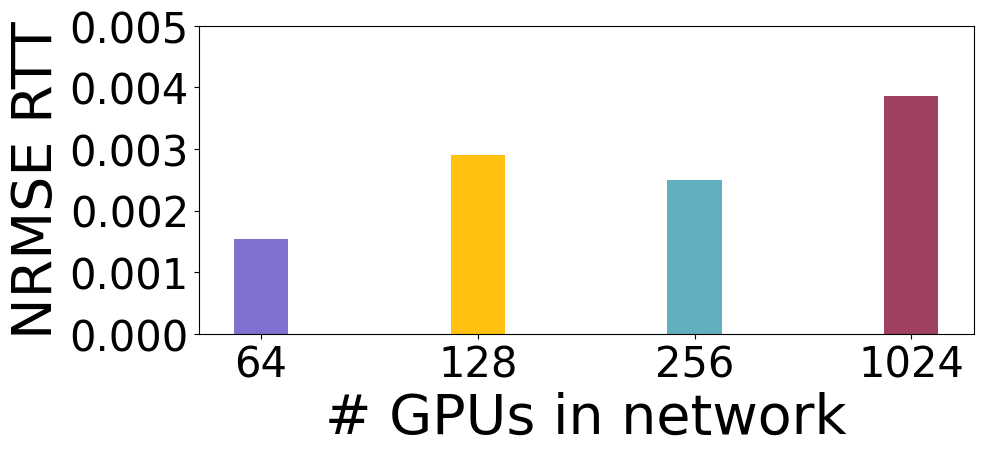}
	\caption{NRMSE of packets RTT in ns-3 and \sys.}
	\label{fig:eval:acc:mse}
\end{figure}

\parab{Speed.}
We evaluate the simulation acceleration of \sys and Unison compared to the original ns-3. 
Figure~\ref{fig:eval:fast:speedup} illustrates the performance of these simulators in LLM training scenarios of varying sizes under HPCC.
While Unison achieves a maximum acceleration of less than 10$\times$, \sys attains an acceleration of 227$\times$-745$\times$ on GPT workloads (135$\times$-510$\times$ on MoE workloads), and \sysU achieves a peak acceleration of 1012$\times$ on GPT workloads (716$\times$ on MoE workloads).
Specifically, \sysU reduces GPT-13B training time on 128 GPUs from 9 hours to 5 minutes, achieving over 1000× speedup.
Figure~\ref{fig:eval:fast:cc} presents the acceleration effects of these simulators under different CCAs in the 64-GPU scenario. 
Results indicate that \sys achieves high acceleration across various CCAs, suggesting that \sys's optimization is broadly applicable for different CCAs.




\parab{Acceleration of memoizing unsteady-states.}
Figure~\ref{fig:eval:fast:skipped_ratio} quantifies the skip rate across CCAs for LLM training, which shows that steady-state fast-forwarding combined with memoization skips >99.5\% of events in GPT workloads and >99.2\% in MoE workloads.  
Comparing the speedup ratios in Figure~\ref{fig:eval:fast:num_steady}, memoization alone delivers an additional 1.93–8.43× acceleration after the dominant steady-state events are skipped, confirming that memoization remains a non-negligible performance booster.

\parab{Acceleration breakdown.}
Figure~\ref{fig:eval:fast:num_steady} shows that skipping steady-states yields speedups exceeding 130× for GPT and 50× for MoE training, confirming that a significant proportion of steady-states in LLM training are redundant and can be safely skipped.
Additionally, the average number of times each flow entered a steady-state was approximately 0.94.
Figure~\ref{fig:eval:fast:skipped_ratio} quantifies the skip rate across CCAs for LLM training, which shows that steady-state fast-forwarding combined with memoization skips >99.5\% of events in GPT workloads and >99.2\% in MoE workloads.  
Comparing the speedup ratios in Figure~\ref{fig:eval:fast:num_steady}, memoization alone delivers an additional 1.93–8.43× acceleration after the dominant steady-state events are skipped, confirming that memoization remains a non-negligible performance booster.
We conduct more acceleration breakdown experiments in Appendix~\ref{appendix:benefit}.

\parab{Simulation database storage cost.}
As illustrated in Figure~\ref{fig:eval:fast:database}, the simulation database stores only a limited amount of critical information, thereby significantly conserving storage space. 
In a scenario with 1024 GPUs, it occupies less than 100KB of space.
Consequently, the database can be entirely put into memory, which enhances the efficiency of database queries and insertions operations.
In addition, the scalability of \sys is consistent with ns-3.

\subsection{\sys is Accurate}
\label{sec:eval:accurate}

\parab{Accuracy.}
We compare the simulation accuracy of \sys and a flow-level simulator relative to the original ns-3 in LLM training workloads. 
Specifically, we calculate the relative error for each flow, and then compute the average of these relative errors. 
Figure~\ref{fig:eval:acc:err} illustrates that \sys maintains an average FCT error within 1\%, significantly lower than the $\sim$20\% error of the flow-level simulator. 
Figure~\ref{fig:eval:acc:cc} presents the average FCT errors of \sys and the flow-level simulator across different CCAs. 
\sys without the unsteady-state memoization mechanism (i.e., skipping only steady-states) exhibits smaller errors than those of \sys itself. 

\parab{Packet-level fidelity.}
We evaluate \sys on boarder metrics. 
\sys evaluation extends to broader metrics through Normalized Root Mean Square Error (NRMSE) computation for all packets of the first flow across scenarios, benchmarked against ns-3. 
Figure~\ref{fig:eval:acc:mse} shows that across multiple scenarios, the NRMSE values fall below 0.005, indicating minimal deviation and confirming end-to-end fidelity across additional performance dimensions.

\begin{figure*}[t]
	\centering
	\begin{minipage}{.32\linewidth}
		\subfloat[{Effects of the monitor metrics}]{
			\includegraphics[width=\columnwidth]{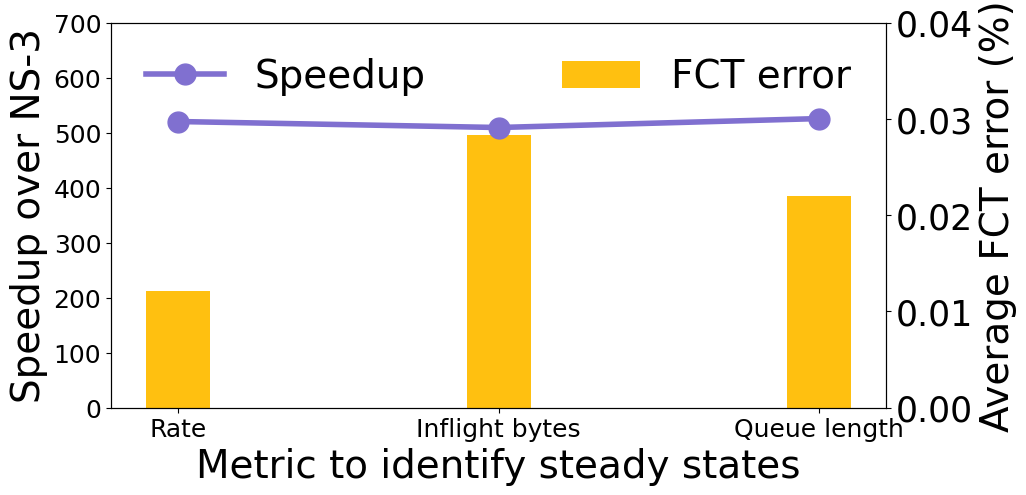}
			\label{fig:eval:sensitive:metrics}
		}
	\end{minipage}
\hfill
	\begin{minipage}{.32\linewidth}
		\subfloat[{Effects of the monitoring interval length $l$}]{
			\includegraphics[width=\columnwidth]{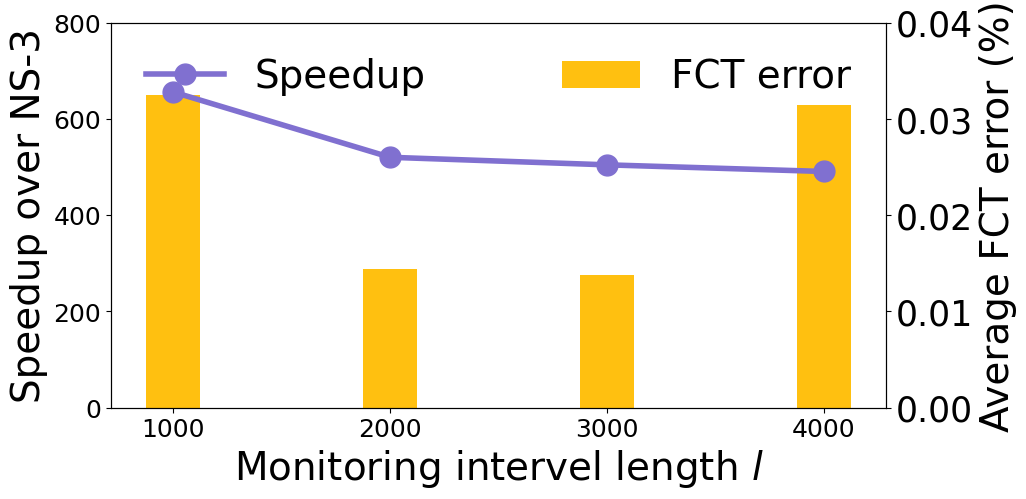}
			\label{fig:eval:sensitive:length}
		}
	\end{minipage}
 \hfill
        \begin{minipage}{.32\linewidth}
		\subfloat[{Effects of the fluctuation threshold $\theta$}]{
			\includegraphics[width=\columnwidth]{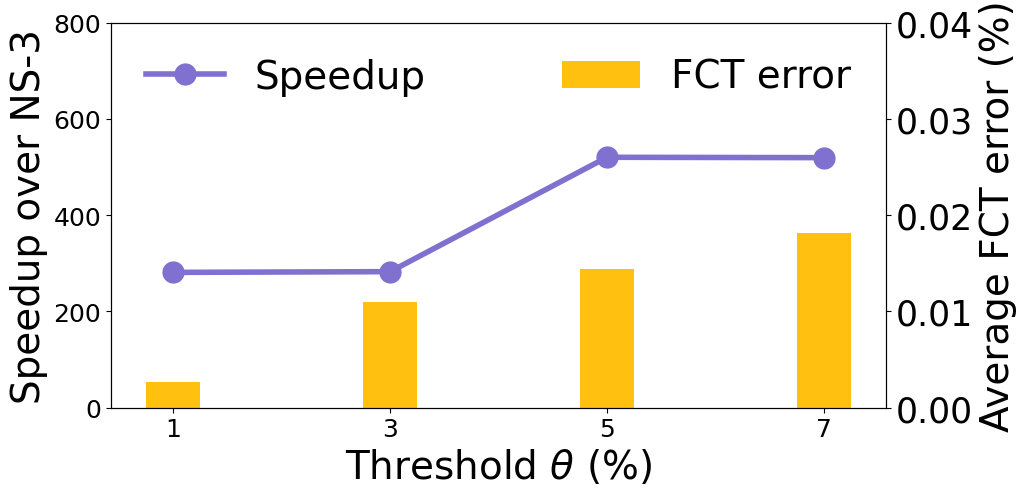}
			\label{fig:eval:sensitive:threshold}
		}
	\end{minipage}
	\caption{Effects of the monitoring metrics and hyper-parameters in \sys.}
	\vspace{-0.1in}
	\label{fig:eval:sensitive}
\end{figure*}

\begin{figure}[t]
	\centering
	\includegraphics[width=0.8\columnwidth]{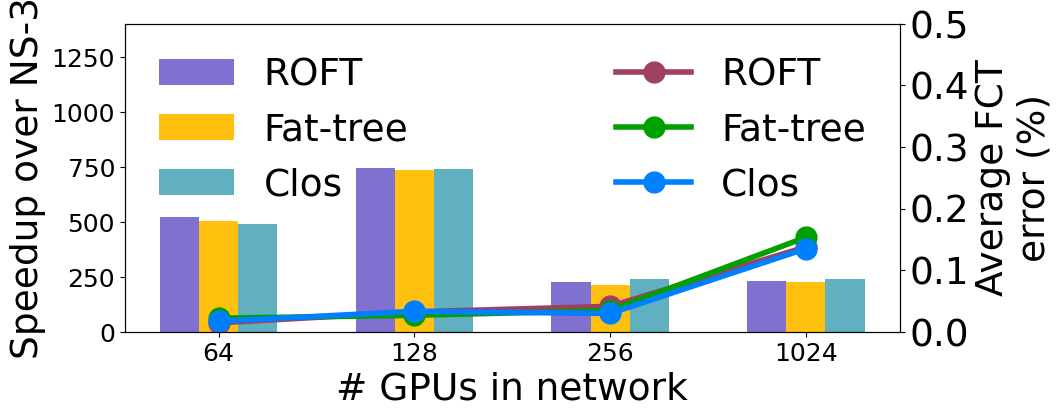}
	\caption{Effects of network topology in \sys.}
	\label{fig:eval:topo}
\end{figure}

\subsection{Sensitivity Analysis}
\label{sec:eval:sensitive}

\parab{Identification metrics.}
The equivalence of sending rate $R$, in-flight bytes $I$, and queue length $Q$ as metrics for steady-state identification algorithms has been theoretically established according to Theorem~\ref{theorem:rate-metrics}. 
To empirically validate this equivalence, we conduct experiments using HPCC with a 128 GPU cluster. 
Specifically, we employ $R$, $I$, and $Q$ individually as metrics for steady-state detection and compare the speedup ratios and errors. 
As depicted in Figure~\ref{fig:eval:sensitive:metrics}, the speedup ratios and errors obtained using these different metrics are closely aligned, which corroborates Theorem~\ref{theorem:rate-metrics}.

\parab{Identification thresholds.} 
By fixing one hyper-parameter and varying the other, we can examine the sensitivity of \sys to two hyper-parameters, $l$ and $\theta$. 
As show in Figures~\ref{fig:eval:sensitive:length}\&\ref{fig:eval:sensitive:threshold}, as $l$ increases or $\theta$ decreases, the criterion for entering the steady-state is more readily satisfied, resulting in a higher ratio of skipped events, which enhances the speedup ratio but also increases the error. 
Therefore, when $l$ and $\theta$ are within appropriate ranges, \sys exhibits insensitivity to the hyper-parameters.
In practice, $\theta=5\%$ is sufficient for most scenarios.

\parab{Network topology.}
We evaluate \sys on multiple network topologies, including the standard setup, Fat-tree~\cite{fattree}, and Clos~\cite{clos1953study}. 
Results are shown in Figure~\ref{fig:eval:topo}, where ROFT denotes Rail-Optimized Fat-tree~\cite{nvidia2023superpod}.
The variation in speedup ratios of \sys across different topologies does not exceed 13\%, and the average FCT error remains within 1\%. 
This indicate that the techniques employed by \sys are applicable to a broad range of data center topologies~\cite{ub-mesh, UBNetworkSimulator, zcube}. 

\subsection{Real-trace based experiments}
\label{sec:eval:real_trace}

We evaluate \sys on a real-world LLM training trace. 
We collect operation-level collective communication latency via NVIDIA Nsight Compute~\cite{nvidia_nsight_compute} from training a GPT-18B on a 256-GPU ROFT cluster. 
The training employs TP8-DP16-PP2-VPP2 parallelism, micro batch size of 1, and global batch size of 512.

\parab{Speed.}
Figure~\ref{fig:eval:real:speed} shows that \sys achieves a 97.75$\times$ speedup over ns-3, while \sysU achieves 133.35$\times$. 
The trace incorporates recomputation and hardware performance fluctuations, producing a more complex workload than idealized traces from SimAI~\cite{simai, sefiane2025mlsynth}.
This complexity reduces \sys speedup relative to idealized scenarios, yet acceleration still approaches more than 100$\times$.

\parab{Accuracy.}
Figure~\ref{fig:eval:real:error} presents that \sys demonstrates 3.02\% end-to-end training time error, while ASTRA-Sim+ns-3~\cite{won2023astrasim2} achieves 3.01\%. 
At comparable accuracy for real-world model training time, \sys provides simulation speed far exceeding the baseline.

\begin{figure}[t]
	\centering
	\begin{minipage}{.45\linewidth}
		\subfloat[Speedup over different simulation methods.]{
		\includegraphics[width=\columnwidth]{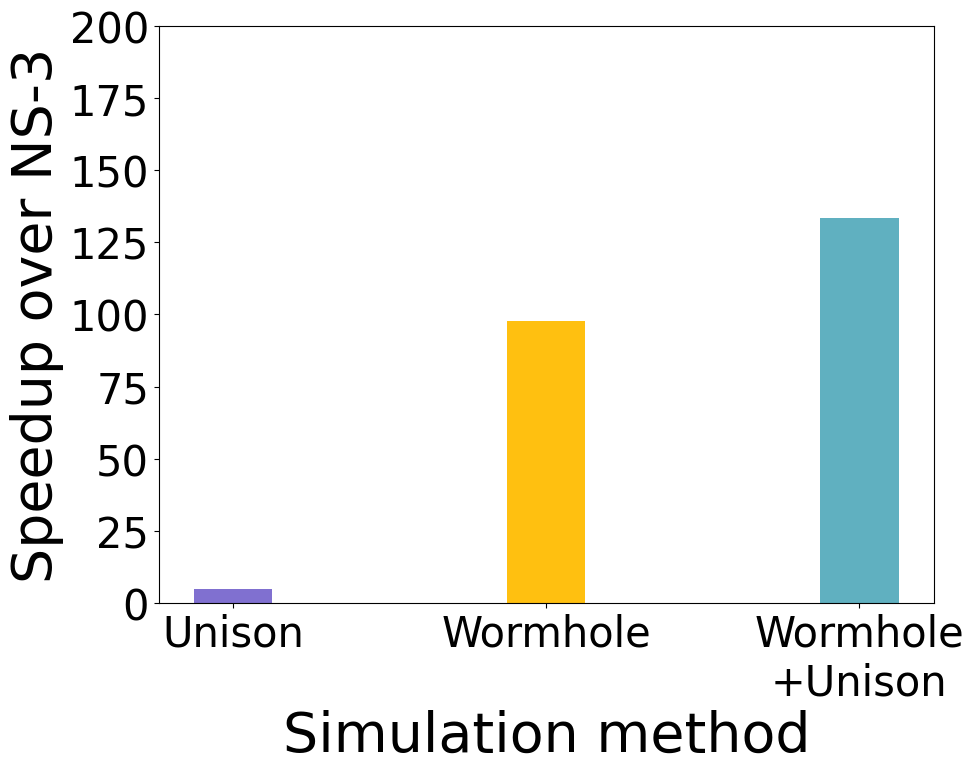} 
		\label{fig:eval:real:speed}
		}
	\end{minipage}
	\hspace{0.1in}
	\begin{minipage}{.45\linewidth}
		\subfloat[End-to-end error over different simulation methods.]{
		\includegraphics[width=\columnwidth]{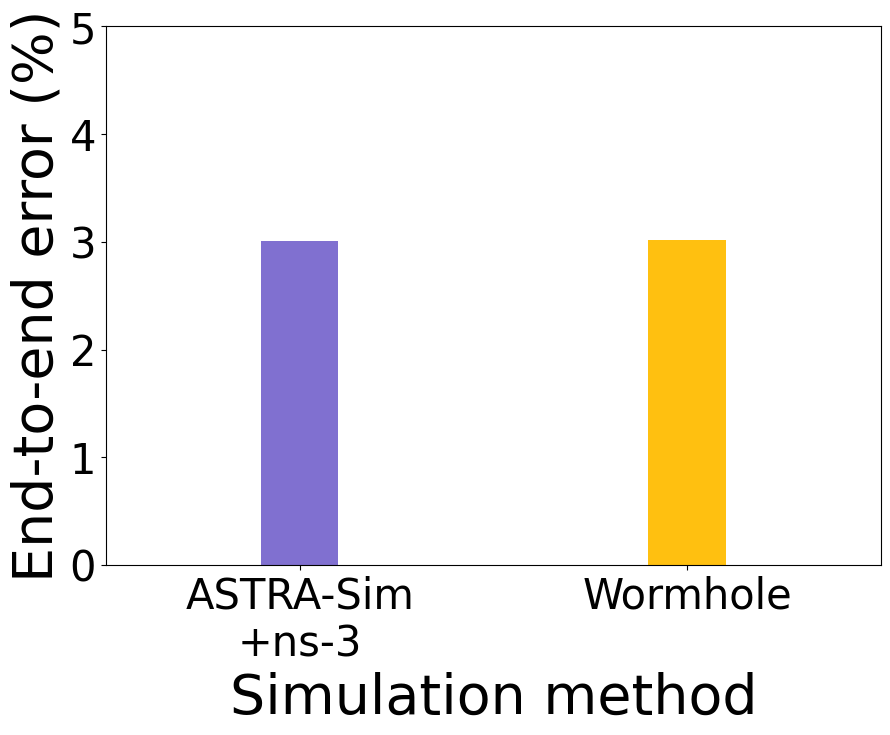} 
		\label{fig:eval:real:error}
		}
	\end{minipage}
	\caption{Speedup and error of real-trace based experiments.}
	\label{fig:eval:real}
\end{figure}



\section{Related Work}
\label{sec:related}
\topparab{Packet-level discrete-event simulators.}
PLDES plays a crucial role in LLM training by providing high-fidelity simulations of network behavior. 
Traditional DES, such as ns-2/3~\cite{ns2,ns3}, OMNeT++~\cite{omnet}, and OPNET~\cite{OPNET}, accurately model packet transmission processes within networks. 
However, as network scale increases, the scalability of these simulators becomes a limiting factor. 
Parallel and distributed DES~\cite{fujimoto1990parallel,jafer2013synchronization,dons,bai2024unison} can enhance the efficiency of DES simulations, but the acceleration effects exhibit sub-linear scaling, eventually reaching an upper bound.

\topparab{LLM training simulation.}
Simulation of large-scale LLM infrastructures has recently gained significant traction. Network simulators such as ASTRA-sim~\cite{rashidi2020astrasim,won2023astrasim2}, SimAI~\cite{simai}, and ATLAHS~\cite{shen2025atlahs} build on PLDES like ns-3 and htsim, but remain bottlenecked by PLDES overhead. To improve scalability, some works~\cite{multiverse,khashab2025nsx,qian2025miniature} exploit GPU or FPGA parallelism, which accelerates execution but leaves the underlying computation unchanged and is orthogonal to \sys. Others adopt coarse-grained models that sacrifice fidelity, either through analytical approximations~\cite{lin2024towards, kundu2024performance} or profiling-based techniques~\cite{feng2024echo, bang2024vtrain, lee2024data, lu2023distsim}.

\topparab{Flow-level simulation.}
Flow-level simulators~\cite{simgrid,mtcloudsim,parsimon,namyar2024solving} employ flow-level abstraction to conduct network simulations and utilize the max-min fairness algorithm~\cite{jaffe1981bottleneck} to address bandwidth allocation. 
This modeling approach achieves a speedup by 2-3 orders of magnitude compared to PLDES, but it introduces significant error margins.
Additionally, a considerable proportion of flow-level simulators~\cite{simgrid,mtcloudsim} focus primarily on specific types of network topologies or applications, thereby limiting their generalizability.

\topparab{AI-based methods.}
AI-based network simulation methods~\cite{yang2022deepqueuenet,kazer18fast,mimicnet,rusek2020routenet,ferriol2023routenet} leverage machine learning to estimate the network performance, resulting in a trade-off with accuracy. They typically require extensive data collection to support model training, but exhibit limited generalizability beyond the training dataset.


\section{Conclusion}
\label{sec:conclusion}
We observe that, in distributed LLM training, packet-level traffic behaviors often exhibit \textit{repetitive contention patterns} and \textit{steady-states}, ignoring these redundant discrete events speeds up the simulation considerably and the error is negligible.
To this end, we propose \sys, a user-transparent PLDES kernel capable of automatically memoization for unsteady-states and skipping for steady-states.
\sys adopts network partitioning, state memoization and reuse, and rate-based steady-state identification to accurately determine the periods of each flow's steady-state.
Experiments demonstrate that \sys can achieve a 744$\times$ speedup over the original ns-3 (510$\times$ for MoE workload), with a bounded error of $<$1\%.
\sysU allows a 1012$\times$ speedup, reducing the simulation time for one GPT-13B training under 128 GPUs from 9 hours to 5 minutes.


\clearpage

{
\bibliographystyle{plain}
\bibliography{sigcomm20}
}

\newpage
\clearpage
\appendix
\section*{Appendix}\label{appendix}
Appendices are supporting material that has not been peer-reviewed.

\section{Network Partitioning Algorithm}
\label{appendix:network_partitioning_algorithm}

Network partitioning algorithm is implemented in Algorithm ~\ref{alg:partition-alg}.

\begin{algorithm}[t]
  \caption{Network partitioning algorithm}  
  \label{alg:partition-alg}
  \SetKwFunction{FMain}{...}
  \SetKwProg{Fn}{Function}{:}{}
   \Fn{construct\_bipartite\_graph{($flows$)}}{
      $connections[1..n + 2 \times m]$ = empty list \\
    \For{$flow$ \textbf{in} $flows$}
    {
        \For{$link$ \textbf{in} $flow.links$}
        {
            $connections[flow.id]$.append($n + link.id$); \\
            $connections[n + link.id]$.append($link.id$); \\
        }
    }
    \textbf{return} $connections$;
    }
    \Fn{DFS{($vertex$, $connections$)}}{
      set\_visited($vertex$); \\
      $local\_partition$ = empty list; \\
    \If{is\_flow($vertex$)}{
        $local\_partition$.append(vertex);
    }
    \For{$neighbor$ \textbf{in} $connections[vertex]$}
    {
        \If{\textbf{not} visited($neighbor$)}
        {
            $neighbor\_partition$ = DFS($neighbor$, $connections$); \\
            $local\_partition$.extend($neighbor\_partition$); \\
        }
    }
    \textbf{return} $local\_partition$;
    }
    \Fn{network\_partitioner{($flows$)}}{
      $connections$ = construct\_bipartite\_graph($flows$); \\
      $network\_partitions$ = empty list; \\
    \For{$flow$ \textbf{in} $flows$}
    {
        \If{\textbf{not} visited($flow.id$)}
        {
            $partitions$ = DFS($flow.id$, $connections$); \\
            $network\_partitions$.append($partitions$); \\
        }
    }      
    \textbf{return} $network\_partitions$;
    }
\end{algorithm}

\section{Proof of Theorem 1}
\label{appendix:proof_metrics}
\begin{proof}
When $R$ is stable, it can be approximated by a fixed value $\overline{R}$ during computation, which is:
\begin{align}
\label{formula:r_approx}
R(t) \approx \overline{R}, \quad |R(t) - \overline{R}| &< \epsilon_R, \quad when \ t_s \leq t \leq t_f, \quad \Delta R < \epsilon_R
\end{align}
This simplification is justified by the negligible variations in the rate during this period, rendering the rate effectively uniform across the interval. 
Such an approximation facilitates subsequent proofs and derivations by reducing the complexity of the analysis and providing a more tractable framework for understanding the system's behavior. 

\parab{CWND.}
For $cwnd$, according to HPCC~\cite{hpcc}, we have 
\begin{align}
cwnd &= R \times \overline{RTT}
\end{align}
where $\overline{RTT}$ represents smooth RTT.
When Equation ~\ref{formula:r_tot_stable} holds, by applying Equation ~\ref{formula:metric_fluc}, we have
\begin{align}
\label{formula:eps_cwnd}
\Delta cwnd &= \max\limits_{t_s \leq t \leq t_f}\{cwnd(t)\} - \min\limits_{t_s \leq t \leq t_f}\{cwnd(t)\} \nonumber \\
&= \overline{RTT} (\max\limits_{t_s \leq t \leq t_f}\{R(t)\} - \min\limits_{t_s \leq t \leq t_f}\{R(t)\}) \nonumber \\ 
&= \overline{RTT} \Delta R \nonumber \\
&< \epsilon_R \overline{RTT} \nonumber \\ 
&:= \epsilon_{cwnd}.
\end{align}
This observation implies that as the rate $R$ enters a steady-state, the fluctuations of $cwnd$ will correspondingly diminish to within another threshold.

\parab{RTT.}
For RTT, TIMELY~\cite{timely} updates the target rate using $\Delta RTT(t) = \frac{dRTT(t)}{dt}$ as follows:
\begin{align}
R(t+\Delta t) = 
\begin{cases}
R(t) + \delta & \text{if} \ \Delta RTT(t) \leq 0 \\
R(t)(1-\beta \frac{dRTT(t)}{dt}) & \text{if}  \ \Delta RTT(t) > 0 
\nonumber
\end{cases}
\end{align}
Using Equation ~\ref{formula:r_approx}, we have
\begin{align}
R(t+\Delta t) - R(t) \nonumber 
\\ =
\begin{cases}
\delta & \text{if} \ \Delta RTT(t) \leq 0 \\
-\beta R(t) \frac{dRTT(t)}{dt}\approx -\beta \overline{R} \Delta RTT(t) & \text{if} \ \Delta RTT(t) > 0 
\nonumber
\end{cases}
\end{align}
when $\Delta RTT(t) = \frac{dRTT(t)}{dt} > 0$, by using Equation ~\ref{formula:r_tot_stable}, we have 
\begin{align}
\Delta RTT(t) &= |\frac{R(t+\Delta t)-R(t)}{\beta \overline{R}}| < \frac{\epsilon_R}{\beta \overline{R}} 
\nonumber
\end{align}
When $\Delta RTT \leq 0$, here $R(t) \leq C$, we have 
\begin{align}
\label{formula:delta_rtt_decrease}
\Delta RTT(t) &= \Delta q(t) = \frac{(R(t) + \delta - C)\Delta t}{C} \leq \frac{\delta \Delta t}{C}
\end{align}
where $C$ is the represents the bandwidth allocated upon convergence, and $q(t)$ is the queuing delay through the bottleneck queue. 
Although Equation ~\ref{formula:delta_rtt_decrease} shows that the fluctuation may be time-related, and it seems that the calculation of $\Delta RTT$ might be related to the cumulative time in the steady period, the actual situation is not the case. 
Considering the speed adjustment mechanism of TIMELY AIMD, during the steady period, an increase in $R$ that causes $\Delta RTT(t) \geq 0$ triggers a Multiplicative Decrease, followed by a series of consecutive Additive Increases until another increase in $R$ again causes $\Delta RTT \geq 0$. 
Therefore, it can be considered that within the steady period, the maximum value of RTT comes from an increase in $R$ that causes $\Delta RTT(t) \geq 0$, and the minimum value of RTT comes from this Multiplicative Decrease adjustment. 
Hence, using Equation ~\ref{formula:delta_rtt_decrease}, there exists a $t'$ such that:
\begin{align}
\Delta RTT &= \Delta RTT(t') < \frac{\delta \Delta t'}{C} \nonumber
\end{align}
where $\Delta t'$ is the interval between two rate adjustments during congestion, which is capped at a specific upper limit. Thus, we have:
\begin{align}
\label{formula:eps_rtt}
\Delta RTT &< max(\frac{\epsilon_R}{\beta \overline{R}}, \frac{\delta \Delta t'}{C}) := \epsilon_{RTT}
\end{align}
for all $t_s \leq t \leq t_f$.
Consequently, $\Delta RTT$ is confined within a constant upper limit. 

\parab{Queue length.}
For queue length $Q$, according to DCTCP~\cite{dctcp}, we have 
\begin{align}
Q(t) &= NW(t) - C \times RTT
\end{align}
where there are $N$ flows on the bottleneck link, and $W(t)$ represents the window size. 
By calculating the fluctuation of $Q$ using Equation ~\ref{formula:metric_fluc}, we have
\begin{align}
\label{formula:eps_q}
\Delta Q &= \max\limits_{t_s \leq t \leq t_f}\{Q(t)\} - \min\limits_{t_s \leq t \leq t_f}\{Q(t)\} 
\nonumber \\
&= \max\limits_{t_s \leq t \leq t_f}\{NW(t) - C \times RTT\} - \min\limits_{t_s \leq t \leq t_f}\{NW(t) - C \times RTT\}
\nonumber \\
&= N\times RTT (\max\limits_{t_s \leq t \leq t_f}\{R(t)\} - \min\limits_{t_s \leq t \leq t_f}\{R(t)\})
\nonumber \\
&= N\times RTT \Delta R
\nonumber \\
&< N\times RTT \epsilon_R
\nonumber \\
&:= \epsilon_{Q}
\end{align}
Consequently, the queue length $Q$ is also stable when $R$ is stable.

\parab{In-flight bytes.}
For in-flight bytes $I$, according to HPCC~\cite{hpcc}, we have 
\begin{align}
I &= Q + R \times RTT_{base}
\end{align}
By calculating the fluctuation of $I$ using Equation (1), we have
\begin{align}
\label{formula:eps_i}
\Delta I &= \max\limits_{t_s \leq t \leq t_f}\{Q(t) + R(t) RTT_{base}\} - \min\limits_{t_s \leq t \leq t_f}\{Q(t) + R(t) RTT_{base}\} 
\nonumber \\
&\leq (\max\limits_{t_s \leq t \leq t_f}\{Q(t)\} - \min\limits_{t_s \leq t \leq t_f}\{Q(t)\}) 
\nonumber \\ 
&\quad + RTT_{base}(\max\limits_{t_s \leq t \leq t_f}\{R(t)\} - \min\limits_{t_s \leq t \leq t_f}\{R(t)\})
\nonumber \\
&= \Delta Q + RTT_{base} \Delta R
\nonumber \\
&< \epsilon_Q + RTT_{base} \epsilon_R
\nonumber \\
&:= \epsilon_I
\end{align}
This implies that when $R$ remains stable, $I$ also maintains stability.

In summary, by Equations ~\ref{formula:eps_cwnd}, ~\ref{formula:eps_rtt}, ~\ref{formula:eps_q} and ~\ref{formula:eps_i}, we have proved Theorem ~\ref{theorem:rate-metrics}.
\end{proof}

\section{Proof of Theorem 2}
\label{appendix:proof_error_rate}
\begin{proof}
During the steady period, by employing Equation ~\ref{formula:relative_fluc}, we obtain
\begin{align}
\label{formula:err_r}
|R(t) - \overline{R}| &\leq |\max\limits_{1 \leq k \leq l}\{R(t_k)\} - \min\limits_{1 \leq k \leq l}\{R(t_k)\}| = \hat{R} \Delta R_l(t) < \theta \hat{R}
\end{align}
for every single $R(t)$ in the steady period. 
Consequently, using Equation ~\ref{formula:err_r}, the error in estimating $\overline{R}$ using $\hat{R}$ is given by:
\begin{align}
\hat{\epsilon_R} &= |\frac{\hat{R}-\overline{R}}{\overline{R}}| = |\frac{\sum\limits_{k=1}^{l} (R(t_k)-\overline{R})}{l\overline{R}}| \leq \frac{\sum\limits_{k=1}^{l} |R(t_k)-\overline{R}|}{l\overline{R}} < |\frac{l\theta \hat{R}}{l\overline{R}}| = \frac{\theta \hat{R}}{\overline{R}}
\nonumber
\end{align}
Thus we have
\begin{align}
|\frac{\hat{R}}{\overline{R}} - 1| < \theta \frac{\hat{R}}{\overline{R}}
\nonumber
\end{align}
By solving $\frac{\hat{R}}{\overline{R}}$, we have 
\begin{align}
\label{formula:r_rel_range}
\frac{1}{1 + \theta} < \frac{\hat{R}}{\overline{R}} < \frac{1}{1 - \theta}
\end{align}
Thus, applying Equation ~\ref{formula:r_rel_range} we have 
\begin{align}
\hat{\epsilon_{R}} &= |\frac{\hat{R}}{\overline{R}} - 1| < \text{max}(\frac{\theta}{1 + \theta}, \ \frac{\theta}{1 - \theta}) = \frac{\theta}{1 - \theta}
\nonumber
\end{align}
which demonstrates Theorem 2.
\end{proof}

\section{Proof of Theorem 3}
\label{appendix:proof_error_fct}
\begin{proof}
Assuming the remaining data volume of the flow upon entering the steady-state is $F$, we have the following for the actual and estimated cases:
\begin{align}
F &= \frac{\int_{t_1}^{t_l} R(t)dt}{t_l - t_1} \overline{T} = \overline{R} \overline{T} \nonumber \\
F &= \frac{\sum\limits_{k=1}^{l}R(t_k)}{l} \hat{T} = \hat{R} \hat{T} \nonumber
\end{align}
By combining these two equations, we obtain
\begin{align}
\hat{T} &= \frac{\overline{R}}{\hat{R}}\overline{T} \nonumber
\end{align}
Using Equation ~\ref{formula:r_rel_range}, we have 
\begin{align}
1 - \theta < \frac{\overline{R}}{\hat{R}} < 1 + \theta
\nonumber
\end{align}
Consequently, we have
\begin{align}
\hat{\epsilon_T} &= |\frac{\hat{T} - \overline{T}}{\overline{T}}| = |\frac{\hat{T}}{\overline{T}} - 1| = |\frac{\overline{R}}{\hat{R}} - 1| < \theta
\nonumber
\end{align}
which demonstrates Theorem 3.
\end{proof}

\newpage
\section{Incremental Recalculate the Network Partitions}
\label{appendix:incremental}

\begin{algorithm}[t]
    \caption{Incremental Partitioning Algorithm}
    \label{alg:incremental-par}
    \SetKwFunction{FMain}{...}
    \SetKwProg{Fn}{Function}{:}{}
    \Fn{on\_new\_flow\_enter{($newflow$)}}{
        $affected\_partitions$ = empty list; \\
        \For{$partition$ \textbf{\text{in}} $all\_partitions$}{
            \If{$newflow.path \text{ } \cap \text{ }partition \neq \emptyset$}{
                $affected\_partition$.append($partition$);
            }
        }
        \If{$affected\_partitions$.size $= 0$}{
            $all\_partitions$.append(partition($newflow$));
        }
        \ElseIf{$affected\_partitions$.size $= 1$}{
            $affected\_partition$.append($newflow$);
        }
        \Else{
            $network\_partitioner$($affected\_partitions.flow$);
        }
    }
    \Fn{on\_old\_flow\_leave($oldflow$)}{
        $affected\_flows$ = empty list; \\
        \For{$flow$ \textbf{\text{in}} $partition.flows$}{
            \If{$flow \textbf{ is not } oldflow$}{
                $affected\_flows$.append($flow$);
            }
        }
        \If{$affected\_flows$.size $\leq$ 1}{
            $partition$.erase($oldflow$); \\
            \If{$affected\_flows$.size = 0}{
                $all\_partitions$.erase($partition$);
            }
        }
        \Else{
            $network\_partitioner$($affected\_flows$);   
        }
    }
\end{algorithm}

The third challenge is to reconstruct the network partitions in a reasonable time and possibly speed up the recalculation process. Since \sys follows port-level partitioning, we observe that such a recalculation will be triggered if a new flow enters the network or a flow is about to finish and leave. \sys keeps track of all flows both activated and scheduled to enter so that during the simulation, \sys knows when the recalculation should be done.

It is worth noting that in a typical scenario of LLM training, such recalculation of partitions may occur frequently, whereas a certain rearrangement requires extra computing resources. Therefore, we propose a more efficient way to update partitions. Notice that many of the updates occur in the local part of the network topology and only affect a few other flows (or partitions). To take advantage of it, we follow a strategy of incremental updating. In other words, merely those affected parts will be updated. More details are in Appendix \ref{appendix:incremental}.

The incremental algorithm is highly restricted in local parts rather than the overall topology. Hence, the worst case of the algorithm is degrading into applying Algorithm~\ref{alg:partition-alg} on the basis of the entire network.

The incremental updating algorithm is implemented in Algorithm~\ref{alg:incremental-par}.
The Algorithm is described in the following two cases.

\parab{Addition of new flow.} If a new flow comes into the topology, we first compute the set of partitions \textit{affected} on account of the specific new flow's path. By affected, meaning the new flow passes through the domain of some existing partitions. After that, different operations are carried out based on the number of partitions that are affected. More details are in Algorithm~\ref{alg:incremental-par}.

\parab{Completion of flow.} Whenever an about-to-finish flow leaves the network, its own partition may split into several parts if the only connection is the leaving flow. In this case, we calculate the remaining number of flows in the leaving flow's partition and do operations based on the exact number of remaining flows. If the remaining number is larger than 1, we use the Function \textit{network\_partitioner} in Algorithm~\ref{alg:partition-alg} to reconstruct the local part of network. Otherwise, simply delete the flow in its partition.

\section{Speedup over simulation duration of \sys}
\label{appendix:benefit}

\begin{figure}[t]
	\centering
	\includegraphics[width=0.8\columnwidth]{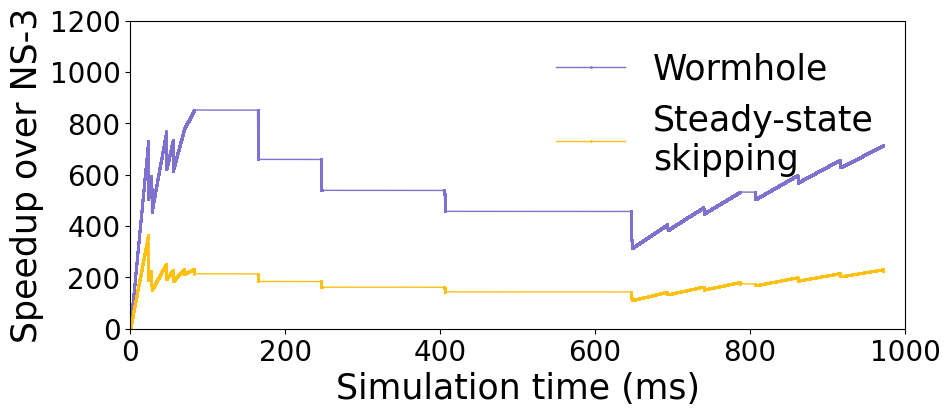}
	\caption{Benefit of \sys over simulation progress.}
	\label{fig:eval:fast:benefit}
\end{figure}

Figure~\ref{fig:eval:fast:benefit} illustrates the temporal progression of Wormhole speedup ratio over simulation time for a 64-GPU configuration. 
The speedup metric quantifies the quotient of events processed by ns-3 divided by events processed by \sys. 
DP flow scenarios with larger flow sizes and complex workload in the initial and final phases amplify \sys performance advantage, while PP flow scenarios with smaller flow sizes in the intermediate phase reduce the average speedup. 
Over time, memoization accumulates performance benefits.

\end{document}